\let\a=\alpha \let\b=\beta  \let\g=\gamma  \let\d=\delta \let\e=\varepsilon
  \let\h=\eta   \let\th=\theta \let\k=\kappa \let\l=\lambda
\let\m=\mu    \let\n=\nu             
\let\s=\sigma \let\t=\tau    
   \let\o=\omega
\def\EE{{\cal E}} \def\VV{{\cal V}}
 \def\WW{{\cal W}}
\def\TT{{\cal T}} 
\def\RR{{\cal R}}\def\LL{{\cal L}}  
\def\DD{{\cal D}}
 \def\pp{{\bf p}}
 \def\xx{{\bf x}} \def\yy{{\bf y}} \def\zz{{\bf z}}
\def\kk{{\bf k}}
\def\PP{{\bf P}}
\def\nn{\nonumber}
\def\RRR{\hbox{\msytw R}}
 \def\ZZZ{\hbox{\msytw Z}}
\def\\{\hfill\break}
\def\={:=}
\let\io=\infty
\def\tende#1{\,\vtop{\ialign{##\crcr\rightarrowfill\crcr\noalign{\kern-1pt
    \nointerlineskip} \hskip3.pt${\scriptstyle #1}$\hskip3.pt\crcr}}\,}
\def\otto{\,{\kern-1.truept\leftarrow\kern-5.truept\to\kern-1.truept}\,}
\def\to{\rightarrow}
\def\qed{\hfill\raise1pt\hbox{\vrule height5pt width5pt depth0pt}}
\def\lis{\overline}
\def\be{\begin{equation}}
\def\ee{\end{equation}}
\def\bea{\begin{eqnarray}}
\def\eea{\end{eqnarray}}
\def\nn{\nonumber}
\def\pref#1{(\ref{#1})}
\newtheorem{lemma}{Lemma}[section]
\newtheorem{theorem}{Theorem}[section]
\font\msytw=msbm10 scaled\magstep1
\xdef\hourmin{\number\count255}
   \xdef\hourmin{\hourmin:\ifnum\count255<10 0\fi\the\count255}}
\let\a=\alpha \let\b=\beta    \let\g=\gamma     \let\d=\delta     \let\e=\varepsilon
  \let\h=\eta     \let\th=\vartheta \let\k=\kappa     \let\l=\lambda
\let\m=\mu    \let\n=\nu                      
\let\s=\sigma \let\t=\tau            
   \let\o=\omega
\def\PP{{\cal P}}\def\EE{{\cal E}}\def\VV{{\cal V}}
\def\WW{{\cal W}}
\def\TT{{\cal T}}
\def\RR{{\cal R}}\def\LL{{\cal L}}
\def\DD{{\cal D}}
\def\RRR{\mathbb{R}} \def\ZZZ{\mathbb{Z}}  
       \def\ZZZ{\hbox{\msytw Z}}
\def\pp{{\bf p}}\def\xx{{\bf x}}
\def\yy{{\bf y}}\def\kk{{\bf k}}\def\nn{{\bf n}}
\def\zz{{\bf z}}
\def\be#1\ee{\begin{equation}#1\end{equation}}
\def\bsp#1\esp{\begin{split}#1\end{split}}
\def\bal#1\eal{\begin{align}#1\end{align}}
\def\bald#1\eald{\begin{aligned}#1\end{aligned}}
\def\ba#1#2\ea{\begin{array}{#1}#2\end{array}}
\def\bea{\begin{eqnarray}}   \def\eea{\end{eqnarray}}
\def\bean{\begin{eqnarray*}} \def\eean{\end{eqnarray*}}
\def\bfr{\begin{flushright}} \def\efr{\end{flushright}}
\def\bc{\begin{center}}      \def\ec{\end{center}}
\def\bd{\begin{description}} \def\ed{\end{description}}
\def\bv{\begin{verbatim}}
\def\nn{\nonumber}
\def\Halmos{\hfill\vrule height10pt width4pt depth2pt \par\hbox to \hsize{}}
\def\pref#1{(\ref{#1})}
\def\qed{\raise1pt\hbox{\vrule height5pt width5pt depth0pt}}
\let\io=\infty
\def\*{\vspace{.2cm}}
\def\tilde#1{{\widetilde #1}}
\def\ins#1#2#3{\vbox to0pt{\kern-#2 \hbox{\kern#1 #3}\vss}\nointerlineskip}
\newdimen\xshift \newdimen\xwidth \newdimen\yshift
\def\insertplot#1#2#3#4#5#6{%
\xwidth=#1pt \xshift=\hsize \advance\xshift by-\xwidth \divide\xshift by 2%
\begin{figure}[ht]
\vspace{#2pt} \hspace{\xshift}
\begin{minipage}{#1pt}
#3 \ifnum\driver=1 \griglia=#6
\ifnum\griglia=1 \openout13=griglia.ps \write13{gsave .2
setlinewidth} \write13{0 10 #1 {dup 0 moveto #2 lineto } for}
\write13{0 10 #2 {dup 0 exch moveto #1 exch lineto } for}
\write13{stroke} \write13{.5 setlinewidth} \write13{0 50 #1 {dup 0
moveto #2 lineto } for} \write13{0 50 #2 {dup 0 exch moveto #1
exch lineto } for} \write13{stroke grestore} \closeout13
\includegraphics{griglia.ps} \fi
\includegraphics{#4.ps}\fi%
\ifnum\driver=2 \fi
\end{minipage}
\caption{#5}
\end{figure}
}
\def\cal#1{\mathcal{#1}}
\begin{document}

\title{Quantum Phase Transition in an Interacting Fermionic Chain.}

\author[1]{F. Bonetto} 
\author[2]{V. Mastropietro} 
\affil[1]{School of Mathematics, 
Georgia Institute of
Technology, 
Atlanta, GA 30332 USA\break
bonetto@math.gatech.edu}

\affil[2]{Dipartimento di Matematica ``Federigo Enriques",
Universit\`a degli Studi di Milano,\break
Via Cesare Saldini 50,
Milano, Italy\break
vieri.mastropietro@unimi.it}
\maketitle

\begin{abstract}
We rigorously analyze the quantum phase transition between a metallic and an 
insulating phase  in (non solvable) interacting spin chains or one dimensional 
fermionic systems. In particular, we prove the persistence of Luttinger 
liquid behavior in the presence of an interaction even arbitrarily close to 
the critical 
point, where the Fermi velocity vanishes and the two Fermi points coalesce. The 
analysis is based on two different multiscale analysis; the analysis of 
the first regime provides gain factors which compensate exactly the small 
divisors due to the vanishing Fermi velocity.
\end{abstract}

\section{Introduction}

\subsection{Spin or fermionic chains}

Recently a great deal of attention has been focused on the quantum phase 
transition between a metallic and an insulating phase  in (non solvable) 
interacting spin chains or one dimensional fermionic systems. Beside its 
intrinsic interest, such problem has a paradigmatic character, see {\it e.g.} 
\cite{S,GL}. Interacting fermionic systems are often investigated using 
bosonization \cite{ML}, but such method cannot be used in this case; it 
requires 
linear dispersion relation, while in our case close to the critical point the 
dispersion 
relation becomes quadratic. Interacting fermionic systems with non linear 
dispersion relation have been studied using convergent expansions, based on 
rigorous Renormalization Group methods. However the estimate for the radius 
of 
convergence of the expansions involved vanishes at the critical point, so that 
they provide no information 
close to the quantum phase transition. This paper contains the first rigorous 
study of 
the behavior close to the metal insulator transition, using an expansion 
convergent uniformly in a  region of parameters including the critical point.

We will focus for definiteness on the model whose Hamiltonian is given by
\be
H=-\sum_x \frac12 [S^1_x S^1_{x+1}+S^2_x S^2_{x+1}]-\l \sum_{x,y} v(x-y) S^3_x
S^3_{y}-\bar h \sum_x S^3_x+U_L\label{xx}
\ee
where  $(S^1_x,S^2_x,S^3_x)=\frac{1}{2}(\s^1_x,\s^2_x,\s^3_x)$, for 
$x=1,2,...,L$, $\s^i_x$, $i=1,2,3$ are Pauli matrices, $\bar h$ is the 
magnetic field and $v(x)$ is a short range even potential, that is
\[
 v(-x)=v(x),\qquad |v(x)|\leq e^{-\k x}.
\]
Finally $U_L$ is an 
operator depending only on $S_1^i$ and $S_L^i$ to be used later to fix the 
boundary conditions. When $v(x-y)=\d_{x,y+1}$, this model is known as $XXZ$ 
Heisenberg spin chain. Setting $\xx=(x_0,x)$, we define $S^i_\xx=e^{H x_0}S^i_x 
e^{-H x_0}$. Moreover, given an observable $O$, we define 
\be
\langle O\rangle_{\b,L}=\frac{{\rm Tr}\,e^{-\b H} O}{{\rm Tr}\, e^{-\b H}}
\qquad\qquad\hbox{and}\qquad\qquad
\langle O\rangle=\lim_{\b\to\io}\lim_{L\to\io}\langle O\rangle_{\b,L}.
\ee
It is well known that spin chains can be rewritten in terms of fermionic 
operators $a^\pm_x$, with $\{a^+_{x}, a^-_{y}\}=\d_{x,y}$, $\{a^+_{x}, 
a^+_{y}\}=\{a^-_{x}, a^-_{y}\}=0$, by the {\it Jordan-Wigner transformation}:
\be
\s^-_x=e^{-i\pi \sum_{y=1}^{x-1} a^+_y a^-_y }  a^-_x\;,
\quad \s^+_x= a^+_x e^{i\pi\sum_{y=1}^{x-1} a^+_y a^-_y }\;,\quad
\s^3_x=2 a^+_x a^-_x-1
\ee
where $\s_x^{\pm}=(\s_x^1\pm i\s_x^2)/2$. In terms of the fermionic operators 
the Hamiltonian becomes
\be
H=- \sum_x \left[\frac12(a^+_{x+1}a^-_x+a^+_{x}a^-_{x+1})+h a^+_x
a^-_x\right]-\l\sum_{x,y} v(x-y)
a^+_{x} a^-_x a^+_{y} a^-_{y}\label{sa}
\ee
where $ h=\bar h-\l\hat v(0)$ and $U_L$ can be chosen so to obtain periodic 
boundary conditions for the fermions, {\it i.e.} $a^\pm_L=a^\pm_1$. Therefore 
the spin chain \pref{xx} can be equivalently represented as a model for 
interacting spinless fermions in one dimension with {\it chemical potential} 
$\m=-h$. 

The 2-point Schwinger function is defined as 
\be
S_{L,\b}(\xx-\yy)=\langle {\bf T} a^-_{\xx} a^+_{\yy}\rangle_{L,\b}
\ee
where ${\bf T}$ is the {\it time ordering} operator, that is  ${\bf T}(
a^-_{\xx} 
a^+_{\yy})=a^-_{\xx} a^+_{\yy}$ if $x_0>y_0$ and ${\bf T} (a^-_{\xx} 
a^+_{\yy})=-a^+_\yy a^-_{\xx}$ if $x_0\le y_0$. We will mostly study the 
infinite volume zero temperature 2-points Schwinger function given by 
\[
\lim_{\b\to\io}\lim_{L\to\io}S_{L,\b}(\xx-\yy)=S(\xx-\yy).
\]

\subsection{Quantum Phase transition in the non interacting case}

The fermionic representation makes the analysis of the $\l=0$ case (the so
called $XX$ chain) quite immediate; writing 
\[
a^\pm_{x}=\frac{1}{L}\sum_k e^{\pm ik x }\hat a^\pm_k
\]
the Hamiltonian becomes
\be\label{H}
H_0=\frac{1}{L}\sum_k \e(k)\hat a^+_k \hat a^-_k\quad\quad\quad \e(k)=-\cos k-h
\ee
where $k=\frac{2\pi n}{L}$, $-\pi\le k< \pi$. 

The {\it ground state} of \eqref{H} depends critically on $h$. Indeed, for  
$h<-1$ the ground 
state is the fermionic vacuum (empty band {\it insulating} state), for $h> 1$ it 
is the state with all fermionic levels occupied (filled band {\it insulating} 
state) and $-1<h<1$ the ground state corresponds to the state in which all the 
fermionic levels with momenta $|k|\le p_F=\arccos (-{ h})$, are occupied 
({\it metallic} state). $p_F$ is called {\it Fermi momentum} and $\pm p_F$ are 
the {\it Fermi points} (the analogous of the Fermi surface in one dimension). In 
other words the values $h=\pm 1$ separate two different behaviors at 
zero temperature; one says that  in correspondence of $h=\pm 1$ there is a {\it 
quantum phase transition} \cite{S} between a metallic and an insulating phase. 

The metallic or insulating phases are signaled by different properties of the 
two point Schwinger function, which is given by
\begin{align}\label{S0bL}
S_{0,L,
\b}(\xx-\yy)=\frac{1}{L}\sum_{k} e^{ikx}&\left\{
\frac{e^{-(x_0-y_0)\e(k)}} {1+e^{-\b\e(k)}}\th(x_0-y_0>0)-\right.\crcr
&\quad\left.\frac{e^{-(\b+x_0-y_0)\e(k)}}{1+e^{-\b\e(k)}}\th(x_0-y_0\le 
0)\right\}
\end{align}
where $\vartheta(x_0)=1$ if $x_0>0$ and $\vartheta(x_0)=0$ otherwise. The 
Schwinger function \eqref{S0bL} is defined for $-\b\le x_0\le \b$ but it can be 
extended periodically over the whole real axis; such extension is smooth in 
$x_0-y_0$ for $x_0-y_0\not =n\b$, $n\in \ZZZ$ and it is discontinuous at 
$\xx-\yy=(n\b,0)$. Since  $S_{0,L, \b}(\xx)$ is antiperiodic in $x_0$ its 
Fourier series is of the form
\be
S_{0,L, \b}(\xx)=\frac{1}{L}\sum_{k} e^{ikx}\hat S_{0,L,
\b}(k,x_0)=\frac{1}{\b L}\sum_{\kk\in \mathcal{D}}e^{-i \kk 
\xx}\hat S_{0,\b,L}(\kk)
\ee
with $\kk=(k_0,k)$, ${\cal D}=\left\{\kk\,|\, k=\frac{2\pi 
m}{L},\;-\pi\leq k < \pi,\; k_0=\frac{2\pi}{\b}(n+\frac{1}{2})\right\}$ and
\be
\hat S_{0,L,\b}(\kk)=\frac{1}{-i k_0+\cos k+h}
\ee
In the metallic phase the Schwinger function $\hat S_{0}(\kk)$ is singular in 
correspondence of the Fermi points $(0,\pm p_F)$. For $|k|$ close to $p_F$ we 
have $\hat S_0(\kk)\sim \frac{1}{ -i k_0+v_F(|k|-p_F)}$. Notice that the 2-point 
Schwinger function is asymptotically identical, if the momenta are measured from 
the Fermi points, to the Schwinger function of massless Dirac fermions in 
$d=1+1$ 
with Fermi velocity $v_F$. For values of $h$ close to $h=-1$ ({\it i.e. } for 
small positive $r$ if we set $h=-1+r$) both the distance of the Fermi points and 
$v_F$ are $O(\sqrt{r})$, that is the Fermi velocity vanishes with continuity and 
the two Fermi points coalesce. At criticality when $r=0$ the 2-point 
function $\hat S_0(\kk)$ is singular only at $(0,0)$ and $\hat S_0(\kk)\sim 
\frac{1}{ -i k_0+\frac12k^2}$; the elementary excitations do not have a 
relativistic linear dispersion relation, as in the metallic phase, but a 
parabolic one. Finally  in the insulating phase for $r<0$ the two point function 
has no singularities.

It is natural to ask what happens to the quantum phase transition in presence of 
the interaction.

\subsection{Quantum Phase transition in the interacting case}

The Schwinger functions of the interacting model in the metallic phase have been 
constructed using Renormalization Group (RG) methods in \cite{BM1, BM2, BM3}. 
Luttinger liquid behavior (in the sense of \cite{Ha}) has been established, 
showing that the power law decay of correlations is modified by the 
interaction via the appearance of critical exponents, that depend in a non 
trivial way on the interaction. It should be stressed that such analysis 
provides a full understanding inside the metallic phase, but gives no 
information on the phase transition; the reason is that the physical observables 
are expressed in terms of renormalized expansions which are convergent 
under the condition
\be
|\l|\le \e |v_F|
\ee
and small $\e$; therefore, the closer one is to the bottom (or the top) of the 
band, the smaller the interaction has to be chosen. This is not surprising, as 
such RG methods essentially show that the interacting fermionic chain is 
asymptotic to a system of interacting massless Dirac fermions in $d=1+1$ 
dimensions with coupling $\frac{\l}{v_F}$. One may even suspect that an 
extremely  weak interaction could produce some quantum instability close to the 
boundary of the metallic phase, where the parameters correspond to a strong 
coupling regime in the effective description.

This is however excluded by our results; we prove the persistence of the 
metallic phase, with Luttinger liquid behavior, in presence of interaction even 
arbitrarily close to the critical point, where the Fermi velocity vanishes. This 
result is achieved writing the correlations in terms of a renormalized expansion 
with a radius of convergence which is independent from the Fermi velocity. In 
order to obtain this result we needs to exploit the non linear corrections to 
the dispersion relation due to the lattice. The proof is indeed based on two 
different multiscale analysis  in two regions of the energy momentum space; in 
the smaller energy region the effective relativistic description is valid while 
in the larger energy region the quadratic corrections due to the lattice are 
dominating. The scaling dimensions in the two regimes are different; after the 
integration of the first regime one gets gain factors which compensate exactly 
the velocities at the denominator produced in the second regime, so that 
uniformity is achieved.

Our main results are summarized by the following theorem. We state it in terms 
of the Fourier transform of the 2-points Schwinger function defined by
\be
S_{L,\beta}(\xx)=\frac{1}{\beta L}\sum_{\kk\in{\cal D}} e^{i\kk\xx}\hat 
S_{L,\beta}(\kk)\qquad \hat 
S(\kk)=\lim_{\b\to\infty}\lim_{L\to\infty}\hat S_{L,\beta}(\kk).
\ee
\vskip.3cm
\begin{theorem} Given $h=-1+r$ with $|r|<1$, there exists $\e>0$ and $C>0$ 
(independent form $L,\b,r$) such that, for $|\lambda|<\e$, we have: 
\begin{enumerate}
\item For $r> 0$ (metallic phase), 
\be
\hat S_{L,\beta}(\kk\pm p_F)= \frac{[k_0^2+\a(\l)^2(\cos 
k-1+\n(\l))^2]^{\h(\l)}}{ 
-i k_0+\a(\l)(\cos k-1+\n(\l))}(1+ \l R_S(\l,\kk))
\label{ffoop}
\ee
where 
\begin{align}
\n(\l)=&r+\l rR_\n(\l)\qquad\a(\l)=1+\l R_\a(\l)\crcr
\h(\l)=&b \l^2r+\l^3 r^{\frac32} R_\h(\l)
\end{align}
with $b>0$ a constant and $|R_i|\leq C$ for $i=S,\n,\a$ and $\h$.
\item For $r=0$ (critical point)
\be
\hat S_{L,\beta}(\kk)= \frac{1+\l R_S(\l,\kk)}{ -i k_0+ \a(\l) 
(\cos(k)-1)}\label{mama}
\ee
where $\a(\l)=1+\l R_\a(\l)$ and $|R_i|\leq C$ for $i=\a,S$.
\item For $r<0$ (insulating phase)  
\be
|\hat S_{L,\beta}(\kk)|\le \frac{C}{ |r|}
\ee
\end{enumerate}
\end{theorem}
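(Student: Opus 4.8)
\medskip\noindent\emph{Proof strategy.}\quad The plan is to represent $\hat S_{L,\b}(\kk)$ as a Grassmann functional integral with free propagator $\hat g(\kk)=(-ik_0+\cos k-1+r)^{-1}$ and quartic interaction of strength $\l$, and to evaluate it through a Wilsonian multiscale expansion. The decisive structural fact is that the geometry of the singular set of $\hat g$ changes with the energy scale: at energies $|k_0|\gtrsim r$ the propagator is large on a single region around $k=0$, where the lattice dispersion is essentially quadratic, $\cos k-1\approx-\tfrac12k^2$, and $\hat g$ is that of a particle above an (almost) empty band; at energies $|k_0|\lesssim r$ --- and only if $r>0$ --- it is instead large on two thin regions around the Fermi points $\pm p_F$, $p_F=\arccos(1-r)\sim\sqrt r$, where it is asymptotically relativistic with velocity $v_F\sim\sqrt r$. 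Accordingly I would split the scales into a \emph{first regime} (scales $\g^h$ from $O(1)$ down to $\sim\max(r,0)$), with the anisotropic rescaling $k_0\sim\g^h$, $|k|\sim\g^{h/2}$ adapted to the quadratic dispersion, and --- when $r>0$ --- a \emph{second regime} (scales $\g^h$ from $\sim r$ down to $0$), with the standard relativistic rescaling around each Fermi point. The scaling dimensions of the local terms differ in the two regimes, and the whole argument rests on matching them at the crossover scale $\g^{h^*}\sim r$.

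In the first regime I would run the tree expansion of Gallavotti--Nicol\`o type. The only relevant local term is a chemical--potential--like $2$--leg term, which is absorbed into a counterterm $\n$ fixed self--consistently so that the dressed singularity sits at $\pm p_F$ (for $r>0$) or at $0$ (for $r=0$); carrying out this fixed--point analysis yields $\n(\l)=r+\l rR_\n$. The marginal $2$--leg terms produce the wave--function and velocity renormalizations collected in $\a(\l)=1+\l R_\a$. The quartic terms are controlled by the modified power counting of the quadratic--dispersion regime: exploiting that at energies $\gtrsim r$ the propagator is that of a particle above an empty band --- so that the particle--hole bubble is depressed and the dangerous channels are screened along the flow --- one shows that the effective interactions reaching the bottom $\g^{h^*}\sim r$ of the first regime carry gain factors, in powers of $v_F$, relative to the bare coupling. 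Since the first regime has only $O(\log 1/r)$ scales and the effective coupling does not grow along the flow, the first--regime expansion converges for $|\l|<\e$ with $\e$ independent of $r,L,\b$. This already settles the cases $r\le 0$: for $r<0$ the flow terminates at the gap scale $\sim|r|$ and the resulting convergent expansion, whose leading term is $\hat g$, gives $|\hat S_{L,\b}|\le C/|r|$; for $r=0$ the flow runs down to scale $0$, every interaction--generated term beyond the wave--function renormalization being irrelevant, and one reads off \eqref{mama}.

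For $r>0$ I would then perform the crossover. At the scale $\g^{h^*}\sim r$ the separation $2p_F\sim\sqrt r$ of the two Fermi points is of the order of the width of the $k$--support of the single--scale propagator; below it the Fermi points are resolved, and re-expanding the renormalized effective theory around $\pm p_F$ produces an effective Luttinger--type model with two chiral fields, linear dispersion of velocity $v_F$, and quartic couplings $g_i$. The relativistic rescaling that makes the second--regime power counting marginal divides the $g_i$ by $v_F$, and it is here that the first--regime gain enters: the gain factors accumulated in the first regime compensate \emph{exactly} the $1/v_F$ produced by the rescaling, so that the effective Luttinger couplings are bounded --- and, in the regime $r\gg\l^2$, small --- uniformly in $r$. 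Moreover, for the density--density vertex the combination $g_2-g_1\propto\hat v(0)-\hat v(2p_F)=O(p_F^2)=O(r)$ that governs the anomalous dimension vanishes as the Fermi points coalesce, so that after division by $v_F$ it is $O(\l\sqrt r)$. Feeding this effective model into the Luttinger--liquid multiscale analysis of \cite{BM1,BM2,BM3} --- with its asymptotically vanishing Beta function for the marginal couplings --- one obtains the wave--function renormalization as a power law with anomalous exponent $\h\sim(g_2-g_1)^2/v_F^2\sim\l^2 r$, that is $\h(\l)=b\l^2 r+\l^3 r^{3/2}R_\h$, while the velocity renormalization multiplies the $\a$ already produced in the first regime and the running $2$--leg counterterm feeds back into $\n$. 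Collecting these one obtains \eqref{ffoop}, with $\a(\cos k-1+\n)$ the renormalized dispersion vanishing at $\pm p_F$.

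The hard part --- and the genuinely new input beyond \cite{BM1,BM2,BM3} --- is the crossover step together with the bookkeeping that the first--regime gain offsets the second--regime $1/v_F$ factors exactly: one must (i) choose the two scale decompositions so that the single--scale bounds degenerate on neither side of the junction, where $\hat g$ interpolates between a single parabolic minimum and two linear branches; (ii) carry the first--regime power counting far enough to extract the $v_F$ gains, which requires tracking the cancellations in the particle--hole channel and the special structure of the density--density vertex; and (iii) solve the fixed--point problem for the counterterm $\n$, which couples the running couplings of the two regimes. Granting this, the convergence of each individual expansion (essentially trivial in the first regime, standard in the second) and the extraction of the stated asymptotics are routine.
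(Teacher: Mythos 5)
Your proposal correctly reproduces the overall architecture of the paper's argument: a Grassmann/tree--expansion RG, two regimes split at the scale $\gamma^{h^*}\sim r$, anisotropic $k_0\sim\gamma^h$, $k\sim\gamma^{h/2}$ scaling in the first regime, a relativistic two--chirality analysis in the second, and a matching at the crossover in which first--regime gain factors offset the $1/v_F$ that the relativistic rescaling produces. The treatment of the cases $r\le 0$, the identification $\lambda_{h^*}\propto \hat v(0)-\hat v(2p_F)=O(r)$ with the consequent effective coupling $O(\lambda\sqrt r)$, and the resulting $\eta=O(\lambda^2 r)$ are all in line with the paper.

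There is, however, a genuine gap in your account of the first regime, and it is precisely the step that makes uniformity in $r$ possible. You attribute the control of the quartic terms to ``particle--hole bubble depression'' and ``screening along the flow'', and you then assert that ``the effective coupling does not grow along the flow'' and that convergence is ``essentially trivial'' because there are only $O(\log 1/r)$ scales. None of this is automatic. In the first regime the quartic kernel has na\"ive scaling dimension $3/2-l/4=1/2>0$ (relevant) and the sextic is marginal; over $O(\log 1/r)$ scales an unrenormalized relevant coupling would grow like $\gamma^{-h^*/2}\sim r^{-1/2}$ and the bound would \emph{not} be uniform. The paper's mechanism is not a flow--level screening but an exact algebraic cancellation in the localization operator: since $\psi^{\epsilon}_{\xx_1} D^{\epsilon}_{\xx_2,\xx_1}=\psi^{\epsilon}_{\xx_1}\psi^{\epsilon}_{\xx_2}$, the would--be local quartic and sextic monomials contain two identical Grassmann fields at the same point and vanish, i.e.\ $\mathcal{L}_1\mathcal{V}^{(h)}_4=\mathcal{L}_1\mathcal{V}^{(h)}_6=0$, so that $\mathcal{R}_1\mathcal{V}_4=\mathcal{V}_4$, $\mathcal{R}_1\mathcal{V}_6=\mathcal{V}_6$ are merely rewritings with explicit zeros and derivatives. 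This Pauli cancellation upgrades the renormalized dimension of every vertex to $\ge 1/2$ (eq.~(2.61)--(101) in the paper) and renders the first regime effectively superrenormalizable; the $v_F$ gains you need at the crossover are then read off from the dimensional mismatch $\gamma^{h_{v'}(3/2-|I_v|/4)}=\gamma^{h_{v'}(2-|I_v|/2)}\gamma^{h_{v'}(-1/2+|I_v|/4)}\le \gamma^{h_{v'}(2-|I_v|/2)}\,v_F^{-1+|I_v|/2}$, not from a statement about particle--hole bubbles. Without identifying this mechanism your argument does not establish the uniform bound you claim (and your ``only $O(\log 1/r)$ scales'' remark fails outright for $r=0$, where $h^*=-\infty$). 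The rest of the crossover bookkeeping and the second--regime Luttinger analysis are as you describe.
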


Clearly, by symmetry, similar results hold at the top of the band by setting 
$h=1-r$.

From the above result we see that in the metallic phase Luttinger liquid 
behavior is present; indeed the interaction changes the location of the Fermi 
points from $p_F=\pm \cos^{-1}(-1+r)$ to $p_F=\cos^{-1}(-1+r+O(\l r))$ and, more 
remarkably, produces an anomalous behavior in the two point Schwinger function 
due to the presence of the critical exponent $\h$. Luttinger liquid behavior 
persists up to the critical point (corresponding to a strong coupling phase in 
the effective relativistic description); interestingly, the critical exponent 
becomes smaller the closer one is to the critical point.  This is due to the 
fact that the effective coupling is $O(\l r)$ (and not $O(\l)$), so that the 
effective coupling divided by the Fermi velocity is $O(\sqrt{r})$ and indeed  
small for small $r$

At the critical point no anomalous exponent is present; the asymptotic 
behavior is qualitatively the same as in the non interacting case, up to a 
finite wave function renormalization and the presence of $\a$. Finally in the 
$r<0$ again an insulating behavior is found, as the 2-point function has no 
singularities.

The proof of the above result is based on a rigorous implementation of the 
Wilsonian RG methods. There is a natural momentum scale, which is $O(r)$,  
separating two different regimes. In the first regime, described in section 2,  
the interaction appears to be  relevant; however, Pauli principle shows that the 
relevant contributions are vanishing and the theory turns out to be effectively 
superrenormalizable: all the interactions are irrelevant and their effect is to 
produce a finite renormalization of the parameters. In the insulating phase or 
at the critical point, only this regime is present. In contrast, in the metallic 
phase there is a second regime, described in section 3,  in which the relevant 
contribution are non vanishing (the presence of the Fermi points introduces an 
extra label in the fermionic fields). The local quartic terms are therefore 
marginal and produce the critical exponents; in this second regime, one has to 
check carefully that the small divisors due to the fact that the Fermi velocity 
is small are compensated by the small factors coming from the integration of the 
first regime.

\section{Renormalization Group integration: the first regime}

\subsection{Grassmann representation}

We introduce a set of {\it anticommuting variables} $\psi^\pm_\xx$
such that $\{\psi^\pm_\xx,\psi^\pm_\yy\}=0$ for every $\xx,\yy$. Given the
propagator
\be
g_{M,L,\beta}(\xx-\yy)=\frac{1}{ \b L}\sum_{\kk\in\DD}
e^{i\kk(\xx-\yy)}\frac{\chi_0(\g^{-M}|k_0|)}{ -i k_0+\cos k+h}\label{pro}
\ee
where $\chi_0(t)$ is a smooth even compact support function equal to $1$ 
for $|t|\le 1$ and equal to $0$ for $|t|\ge \g$, for some $1<\g\leq 2$,
we define the {\it Grasmann integration} $P(d\psi)$ on the {\it
Grassman algebra} generated by the $\psi^\pm_\xx$ by setting
\[
 \int P(d\psi)\,\prod_{i=1}^n\psi^-_{\xx_i}\,\prod_{j=1}^n\psi^+_{\yy_j}=\det G
\]
where $G$ is the $n\times n$ matrix with entries 
$G_{i,j}=g_{M,L,\beta}(\xx_i-\yy_j)$. We can extend this definition to a generic 
monomial in the $\psi^\pm$ using the anticommutation rule and to the full 
algebra by linearity.

Observe that 
if $\xx\not= (0,n\b)$
\be
\lim_{M\to\io} g_{M,L,\beta}(\xx)=S_{0,L,\beta}(\xx)\label{pro1}\ee
while for $\xx=(0,n\b)$
\be
\lim_{M\to\io} 
g_{M,L,\beta}(\xx)=\frac{S_{0,L,\beta}(0^+,0)+S_{0,L,\beta}(0^-,0)}{ 
2}\label{sasas}
\ee
while $S_{0,L,\beta}(n\b, 0)=S_{0,L,\beta}(0^-,0)$.

By extending the Grassmann algebra with a new set of anticommuting
variables $\phi^\pm_\xx$, we can define the {\it Generating Functional} 
$\WW(\phi)$ as the following Grassmann integral
\be
e^{\WW_M(\phi)}=\int P(d\psi)e^{\VV(\psi)+(\psi,\phi)}\label{fi}
\ee
where, if $\int d\xx=\int dx_0\sum_x$, we set
\begin{align}
(\psi,\phi)&=\int d\xx [\psi^+_\xx\phi^-_\xx+\psi^-_\xx\phi^+_\xx]
\label{V}\\
\VV(\psi)&=\l \int d\xx d\yy v(\xx-\yy) \psi^+_\xx\psi^-_\xx
\psi^+_\yy\psi^-_\yy+\bar\n\int d\xx \psi^+_\xx
\psi^-_\xx
\end{align}
and $v(\xx-\yy)=\d(x_0-y_0)v(x-y)$; moreover 
\be
\bar\n=\l \hat v(0)\left[
\frac{S_{0,L,\beta}(0,0^+)-S_{0,L,\beta}(0,0^-)}{ 2}\right].
\ee
The presence in \pref{fi} of the counterterm $\bar\n$ is necessary to take into
account the 
difference between $g(\xx)$ and $S_0(\xx)$, see \pref{sasas}.

We define the Schwinger functions as
\be
S_{M,L,\b}(\xx-\yy)=\frac{\partial^2}{\partial\phi^+_{\xx}\partial\phi^-_\yy}
W(\phi)|_{\phi=0}
\ee
One can easily check, see for instance Proposition 2.1 of \cite{BFM1}, that if 
$S_{\b,L,M}$ is analytic for $|\l|\leq\e$ then $\lim_{M\to\io} 
S_{M,L,\b}(\xx)$ coincides with $S_{L,\b}(\xx)$ for $\l\leq\e$. Thus if we can 
show that $\e$ does not depend on $L,\beta$ and that the convergence of  
$S_{M,L,\b}(\xx)$ as to $S_{L,\b}(\xx)$ is uniform in $L,\beta$, we can 
study the two-point function of \pref{sa} by analyzing the Generating 
functional \pref{fi}.

For definiteness, we take $|r|\leq 1/2$. The remaining range of $r$ is covered 
by the results in \cite{BM1}. The starting point of the analysis is the 
following decomposition of the
propagator
\be
g_{M,L,\beta}(\xx-\yy)=g^{(> 0)}(\xx-\yy)+g^{(\leq 0)}(\xx-\yy)
\ee
where 
\be
g^{(\leq 0)}(\xx-\yy)=\int d\kk
e^{i\kk(\xx-\yy)}\frac{\chi_0(\g^{-M}|k_0|)\chi_{\leq 0}(\kk)}{ -i k_0+\cos
k+h}\label{prol0}
\ee
where $\int d\kk$ stands for $\frac{1}{ \b 
L}\sum_{\kk\in\DD}$, $\chi_{<0}(\kk)=\chi_0\left(a_0^{-1}\sqrt{k_0^2+(\cos 
k-1+r)^2}\right)$, and\\
$g^{(> 0)}(\xx-\yy)$ is equal to \pref{prol0}
with $\chi_{\leq 0}(\kk)$ replaced by $(1-\chi_{\leq 0}(\kk))$. We chose 
$a_0=\g^{-1}(1/2-r)$ so that, in the support of $\chi_{<0}(\kk)$ we have 
$|k|\leq\pi/6$. This assures that on the domain of $\chi_{<0}$ we have
\[
 c|k|\leq |\sin(k)|\leq C|k|
\]
for suitable constant $c$ and $C$.

By using the addition property of Grassmann integrations we can write
\be
e^{\WW(\phi)}=\int P(d\psi^{(> 0)}) P(d\psi^{(\le 0)})
e^{\VV(\psi^{(> 0)}+\psi^{(\le 0)}
)+(\psi^{(> 0)}+\psi^{(\le 0)}
,\phi)}.
\ee
After integrating the field $\psi^{(> 0)}$ one obtains
\be
e^{\WW(\phi)}=e^{-\b L F_0}
\int P(d\psi^{(\le 0)})
e^{\VV^{(0)}(\psi^{(\le 0},\phi)
}\label{ss}
\ee
It is known, see for instance Lemma 2.2 of \cite{BFM1} for a proof, that
$\VV^{(0)}(\psi^{(\le 0},\phi)$ is given by 
\be 
\VV^{(0)}(\psi,\phi)=\sum_{n,m\ge 0}
\int d\underline\xx \int d\underline\yy
\,\prod_{i=1}^n\psi^{\e_i}_{\xx_i}\,\prod_{j=1}^m\phi^{\s_j}_{\xx_j}\,
W_{n,m}(\underline\xx,\underline\yy)
\ee
where $\underline\xx=(\xx_1,\ldots,\xx_n)$ and
$\underline\yy=(\yy_1,\ldots,\yy_m)$ while $\prod_{i=1}^n\psi^{\e_i}_{\xx_i}=1$
if $n=0$ and \\$\prod_{j=1}^m\phi^{\s_j}_{\yy_j}=1$ if $m=0$; moreover 
$W_{n,m}(\underline\xx,\underline\yy)$ are given by convergent power series in 
$\l$ for $\l$ small enough and they decay faster than any power in any 
coordinate difference. Finally, the limit $M\to\io$ of 
$\VV^{(0)}(\psi,\phi)$ exists and is reached 
uniformly in $\b,L$.

\subsection{The infrared integration}

Thus we are left with the integration over $\psi^{(\le 0)}$. The heuristic 
idea to perform this integration is to decompose $\psi_\xx^{(\le 0)}$ as
\[
\psi_\xx^{(\le 0)}=\sum_{h=0}^{-\infty}\psi_\xx^{(h)}
\]
where $\psi_\xx^{(h)}$ depends only on the momenta $\kk$ such that ${ -i 
k_0+\cos k-1+r}\simeq \g^h$. By using repeatedly the addition property for 
Grasmann integration this decomposition should allow us to integrate 
recursively 
over the $\psi^{(h)}$. The index $h$ is called the {\it scale} of the field 
$\psi^{(h)}$. Two different regimes will naturally appear in the analysis, 
separated by an energy scale depending on $r$. In this section we describe 
in detail the integration over the first scale and then we give the recursive 
procedure. To simplify notation we study only the case $\phi=0$. The 
general case can be obtained easily. 

We saw that after the ultraviolet integration we have
\be
e^{\WW(0)}=e^{-\b L F_0}\int P(d\psi^{(\le h)}) e^{-\VV^{(0)}(\psi^{(\le 
0)})}\label{jj0}
\ee
where $\VV^{(0)}(\psi^{(\le 0)})$ is the {\it effective potential} on scale 0 
and can be written has
\be 
\VV^{(0)}(\psi)=\sum_{n\ge 1}
W^{(0)}_{2n}(\underline\xx,\underline\yy)\prod_{i=1}^{n}\psi^{+}_{\xx_i}
\psi^-_{\yy_i}=\sum_{n\ge 1} \VV_{2n}^{(0)}(\psi) 
\ee
A direct perturbative analysis suggest that to perform the integration 
\eqref{jj0} we need a {\it renormalized} multiscale integration procedure. In 
particular, the terms with $n=1,2$ are {\it relevant} and the terms with $n=3$ 
are {\it marginal}. For this reason we introduce a {\it localization operator} 
acting on the effective potential as
\be \VV^{(0)}=\LL_1 \VV^{(0)}+\RR_1 \VV^{(0)}\label{loc} 
\ee
with $\RR_1=1-\LL_1$ and $\RR_1$ is defined in the following way; 
\begin{enumerate}
\item
$\RR_1 \VV^{(0)}_{2n}=\VV^{(0)}_{2n}$ for $n\ge 4$;
\item
for $n=3, 2$
\be \label{R2}
\RR_1 \VV_4^{(0)}(\psi)=\int \prod_{i=1}^4 d\xx_1 
W_4^{(0)}(\underline 
\xx)
\psi^{+}_{\xx_1}D^+_{\xx_2,\xx_1}
\psi^{-}_{\xx_3}D^-_{\xx_4,\xx_3} 
\ee
\be \label{R3}
\RR_1 \VV_6^{(0)}(\psi)=\int \prod_{i=1}^6 d\xx_i
W_6^{(0)}(\underline \xx)
\psi^{+}_{\xx_1}D^+_{\xx_2,\xx_1}D^+_{\xx_3,\xx_1}
\psi^{-}_{\xx_4}D^-_{\xx_5,\xx_4}D^-_{\xx_6,\xx_4} 
\ee
where
\be
D^\e_{\xx_2,\xx_1}=\psi^\e_{\xx_2}-\psi^\e_{\xx_1}=(\xx_2-\xx_1)\int_0^1
dt {\bf \partial}\psi^\e_{\xx'_{1,2}(t)}\label{D}
\ee
with $\xx'_{1,2}(t)=\xx_1+t(\xx_1-\xx_2)$ will be called an {\it interpolated 
point}.
\item For $n=1$
\be
\RR_1 \VV_2^{(0)}(\psi)=\int d\xx_1 d\xx_2 W_2^{(0)}(\underline
\xx)\psi^{+}_{\xx_1}H^-_{\xx_1,\xx_2}
\ee
where
\begin{align}
H^-_{\xx_1,\xx_2}=&
\psi^-_{\xx_2}-\psi^-_{\xx_1}-(x_{0,1}-x_{0,2})\partial_0\psi^-_{\xx_1}-(x_{1}
-x_{2})\partial_1\psi^-_{\xx_1}-\crcr
&\frac{1}{ 2}(x_1-x_2)^2\tilde\Delta_1\psi
\end{align}
and
\begin{align*}
\tilde\partial_1\psi^-_{\xx}&=\frac 12(\psi^-_{\xx+(0,1)}-\psi^-_{\xx-(0,1)})
=\int d\kk \sin k e^{i\kk\xx}\hat\psi^-_\kk\crcr
\tilde\Delta_1\psi^-_{\xx}&=\psi^-_{\xx+(0,1)}-2\psi^-_{\xx}+\psi^-_{\xx-(0,1)}
=2\int
d\kk (\cos k-1) e^{i\kk\xx}\hat\psi^-_\kk 
\end{align*}
\end{enumerate}

As a consequence of the above definitions
\begin{align}
 \LL_1\VV^{(0)}=&\hat W_2^{(0)}(0) \int d\xx\psi^+_\xx\psi^-_\xx+
\partial_0 \hat W_2^{(0)}(0)
\int d\xx 
\psi^+_\xx\partial_0\psi_\xx^-+\crcr
&{\frac12}\partial^2_1 \hat W_2^{(0)}(0)
 \int d\xx
\psi^+_\xx\partial^2_1\psi_\xx^-\label{blue} 
\end{align}
where we have used that
\begin{enumerate}[i.]
\item $g^{(0)}(k_0,k)=g^{(0)}(k_0,-k)$, so that we get
\be
\partial_1 \hat W^{(0)}_2(0)=0
\ee
\item  There are no terms in $\LL_1\VV^{(0)}$ with four or six fermionic 
fields, 
as
\be
\psi^{\e}_{\xx_1}D^\e_{\xx_2,\xx_1}=\psi^{\e}_{\xx_1}\psi^{\e}_{\xx_2} 
\label{p}.
\ee
and therefore $\RR_1\VV^{(0)}_4=\VV^{(0)}_4$ and 
$\RR_1\VV^{(0)}_6=\VV^{(0)}_6$. As 
a consequence \eqref{R2}\eqref{R3} just represent a different way to 
write the four and six field contribution to the effective potential. This 
representation will be useful in the following where we will exploit the 
dimensional gain due to the zero term $\xx_2-\xx_1$ and the derivative in 
eq.\eqref{D}.

\end{enumerate}

We will call $\LL_1\VV^{(h)}$ the {\it relevant part of the effective 
potential}.
Since it is quadratic in the fields, we can include it in the free 
integration finding
\be
e^{\WW(0)}=e^{-\b L(F_0+e_0)}\int \tilde P(d\psi^{(\le 0)}) e^{
-\RR_1 \VV^{(0)}(\psi^{(\le 0)})}
\label{2.40b}
\ee
where the propagator of $\tilde P(d\psi^{(\le 0)})$ is now
\be
\tilde g^{(\le 0)}(\xx)=\int d\kk
e^{i\kk\xx}\frac{\chi_{\le 0}(\kk)}{ -i 
k_0(1+z_{-1})+(1+\a_{-1})(\cos k-1)+r+\g^{-1}\m_{-1}}
\ee
and
\bea
&&z_{-1}=z_0+\chi_{\le 0}(\kk)\partial_0 \hat W^{(0)}_2(0)\quad\quad 
\a_{-1}=\a_0+\chi_{\le 0}(\kk)\partial^2_1 \hat W^{(0)}_2(0)\nn\\
&&\m_{h-1}=\m_0+\chi_{\le 0}(\kk)\g^{-0}\hat W^{(0)}_2(0)\label{bbe}
\eea
where $z_0=\a_0=\m_0=0$ but we have added them in \eqref{bbe} for later 
reference.

We can now write
\be
\tilde g^{(\leq 0)}(\xx)=g^{(\leq -1)}(\xx)+\tilde g^{(0)}(\xx)\label{dec0}
\ee
where
\be
g^{(\leq -1)}(\xx)=\int d\kk
e^{i\kk\xx}\frac{\chi_{\leq -1}(\kk)}{ -i 
k_0(1+z_{-1})+(1+\a_{-1})(\cos k-1)+r+\g^{-1}\m_{-1}}\label{prol1}
\ee
with 
\[
\chi_{<-1}(\kk)=\chi_0\left(\g a_0^{-1}\sqrt{(1+z_{-1})^2k_0^2+((1+\a_{-1})\cos 
k-1+r+\g^{-1}\m_{-1})^2}\right). 
\]
Clearly 
\be
\tilde g^{(0)}(\xx)=\int d\kk
e^{i\kk\xx}\frac{{f_0}(\kk)}{ -i 
k_0(1+z_{-1})+(1+\a_{-1})(\cos k-1)+r+\g^{-1}\m_{-1}}
\ee
where
\[
 f_{0}(\kk)=\chi_{\leq 0}(\kk)-\chi_{\leq -1}(\kk).
\]
Using again the addition property for Grassmann integrations we 
can rewrite \pref{jj0} and perform the integration over $\psi^{(0)}$ as
\begin{align}
e^{\WW(0)}=&e^{-\b L(F_0+e_0)}\int P(d\psi^{(\le -1)})\int \tilde P(d\psi^{(0)})
 e^{-\RR_1 \VV^{(0)}(\psi^{(\le 0)})}=\\
 =&e^{-\b LF_{-1}}\int P(d\psi^{(\le 
-1)})e^{-\VV^{(-1)}(\psi^{(\le -1)})}
\label{2.40c}
\end{align}
where $\tilde P(d\psi^{(0)})$ is the integration with propagator $\tilde 
g^{(0)}(\xx)$, $P(d\psi^{(\le 1)})$ is the integration with propagator 
$\tilde g^{(\le 1)}(\xx)$ and
\be
 e^{-\b Le_0-\VV^{(-1)}(\psi^{(\le -1)})}=\int \tilde P(d\psi^{(0)})
 e^{-\RR_1 \VV^{(0)}(\psi^{(\le 0)})}
\ee
The fact that this integration is well defined follows from the properties of 
the propagator $\tilde g^{(0)}(\xx)$ that will be derived in Lemma \ref{L1} 
below.

We can now repeat the above procedure iteratively. At the $h$ step (i.e. at 
scale $h$) we start with the integration 
\be
e^{\WW(0)}=e^{-\b LF_{h}}\int P(d\psi^{(\le h)})e^{-\VV^{(h)}(\psi^{(\le 
h)})}\label{jj}
\ee
defined by the propagator

\be
g^{(\leq h)}(\xx)=\int d\kk
e^{i\kk\xx}\frac{\chi_{\leq h}(\kk)}{ -i 
k_0(1+z_{h})+(1+\a_{h})(\cos k-1)+r+\g^{h}\m_{h}}\label{prol1p}
\ee
with 
\[
\chi_{\leq h}(\kk)=\chi_0\left(\g^{-h} 
a_0^{-1}\sqrt{(1+z_{h})^2k_0^2+((1+\a_{h})(\cos 
k-1)+r+\g^{h}\m_{h})^2}\right). 
\]
and the effective potential on scale $h$ is given by
\[ 
\VV^{(h)}(\psi)=\sum_{n\ge 1}
W^{(h)}_{2n}(\underline\xx,\underline\yy)\prod_{i=1}^{n}\psi^{+}_{\xx_i}
\psi^-_{\yy_i}=\sum_{n\ge 1} V_{2n}^{(h)}(\psi) 
\]
Again we can apply the operator $\LL_1$ to $\VV^{(h)}$ to get
\be 
{\cal V}^{(h)}=\LL_1 {\cal V}^{(h)}+\RR_1 {\cal V}^{(h)}\label{loch}\, . 
\ee
where $\RR$ is defined exactly as in the case of $\VV^{(0)}$ and
\begin{align}
\LL_1\VV^{(h)}=&\hat W_2^{(h)}(0) \int d\xx\psi^+_\xx\psi^-_\xx+
\partial_0 \hat W_2^{(h)}(0)
\int d\xx 
\psi^+_\xx\partial_0\psi_\xx^-+\crcr
&{\frac12}\partial^2_1 \hat W_2^{(h)}(0)
 \int d\xx
\psi^+_\xx\partial^2_1\psi_\xx^-\label{bluep} 
\end{align}
Moving the local part of the effective potential into the integration we get
\be
e^{\WW(0)}=e^{-\b L(F_h+e_h)}\int \tilde P(d\psi^{(\le h)}) e^{
-\RR V^{(h)}(\psi^{(\le h)})}
\label{2.40bp}
\ee
where the propagator of $\tilde P(d\psi^{(\le h)})$ is
\be
\tilde g^{(\le h)}(\xx)=\int d\kk
e^{i\kk\xx}\frac{\chi_{\le h}(\kk)}{ -i
k_0(1+z_{h-1})+(1+\a_{h-1})(\cos k-1)+r+\g^{h-1}\m_{h-1}}
\ee
and the {\it running coupling constants} are defined recursively by
\bea
&&z_{h-1}=z_h+\chi_{\le h}(\kk)\partial_0 \hat W^{(h)}_2(0)\quad\quad 
\a_{h-1}=\a_h+\chi_{\le h}(\kk)\partial^2_1 \hat W^{(h)}_2(0)\nn\\
&&\m_{h-1}=\m_h+\chi_{\le h}(\kk)\g^{-h}\hat W^{(h)}_2(0)\label{bbep}
\eea
Finally we can rewrite \pref{2.40b} as
\be
e^{\WW(0)}=e^{-\b L(F_h+e_h)}\int P(d\psi^{(\le h-1)})\int \tilde P(d\psi^{(h)})
 e^{
-\RR_1 V^{(h)}(\psi^{(\le h)})}
\label{2.40cp}
\ee
where $\tilde P(d\psi^{(h)})$ has now propagator
\begin{align}
\tilde g^{(h)}(\xx)=&\int d\kk e^{i\kk\xx}\frac{f_h(\kk)}{ -i 
k_0(1+z_{h-1})+(1+\a_{h-1})(\cos k-1)+r+\g^{h^*}\m_{h-1} }=\crcr
&\int d\kk e^{i\kk\xx}\hat g^{(h)}(\kk)
\label{prop}
\end{align}
and  $f_h(\kk)=\chi_h(\kk)-\chi_{h-1}(\kk)$; one can perform the integration over $\psi^{(h)}$ 
\be
e^{-\b L\bar e_h-{\cal V}^{h-1}}=\int \tilde P(d\psi^{(h)})
 e^{
-\RR_1 {\cal V}^{(h)}(\psi^{(\le h)})}
\label{2.40cc}
\ee
obtaining an expression identical to \pref{jj} with $h-1$ replacing $h$,
so that the procedure can be iterated. 

To show that the above procedure is well defined we need to study the 
propagator $\tilde g^{(\le h)}(\xx)$. We first have to distinguish two range of 
scales. Do do this we set
\[
 h^*=\inf\{h\,|\, a_0\g^{h+1}>|r|\}.
\]
The construction of the 
theory for $r>0$, is based on the fact that the behavior of the propagator 
changes significantly when one reaches the scales $h\simeq h^*$. 
To understand this phenomenon, let's, for simplicity sake, neglect the 
presence of the running constant in the function $\chi_{\leq h}$.  We will 
see in Lemma \ref{L1} and Lemma \ref{L3} below that the presence of 
$\alpha_h$, $z_h$ and $\mu_h$ will not change the picture.
In this situation, it is easy to see that if $h>h^*$ then the domain 
of $f_h(\kk)$ is a ring of width $\g^h$ that goes around both Fermi points 
$(0,\pm p_F)$. At this momentum scale the propagator does not distinguish
between $p_F$ and $-p_F$. 
On the other hand, when $h<h^*$ we have
\be
 k_0^2+(\cos k-1+r)^2>a^2_0\g^{2h+1}\label{cha}
\ee
in an open neighbor of the $k_0$ axis. This means that the domain of $f_h(\kk)$ 
splits in two rings, one around $p_F$ and the other around $-p_F$. In this 
situation it is convenient to write the propagator as a sum of two {\it 
quasi-particle} propagators, each of which depends only on the momenta close 
to one of the Fermi points.

Here we need precise estimates on $\tilde g^{(h)}$ for $h>h^*$ as reported in 
the following Lemma. The case $h\leq h^*$ will be studied in section 3.

\begin{lemma}\label{L1}
Assume that there exists a constant $K>0$ such that 
\be\label{hh}
 |z_h|, |\a_h|, |\mu_h| < K|\l|
\ee
for $h\ge h^*$. Then for every $N$ and $\l$ small enough we have
\be\label{b}
\left|\partial_0^{n_0} \tilde\partial_1^{n_1}\tilde g^{(h)}(\xx)\right|\le
C_N\frac{\g^{\frac{h}{2}}}{1+[\g^h|x_0|+\g^{\frac{h}{2}}|x|]^N}\g^{
h(n_0+n_1/2)}
\ee
with $C_N$ independent form $K$.
\end{lemma}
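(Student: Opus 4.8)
\emph{Overall plan.} The plan is to reduce \pref{b} to a standard single--scale propagator estimate. I want to establish three facts about $\hat g^{(h)}(\kk)=f_h(\kk)/D_h(\kk)$, where $D_h(\kk)$ denotes the denominator in \pref{prop}: (i) $\mathrm{supp}\,f_h$ is contained in an anisotropic box $\{|k_0|\lesssim\g^h,\ |k|\lesssim\g^{h/2}\}$, of measure $\lesssim\g^{3h/2}$ (uniformly in $\b,L$); (ii) $|D_h(\kk)|\ge c\,\g^h$ on $\mathrm{supp}\,f_h$; (iii) on that box $f_h$ and $1/D_h$ obey the natural scaled $C^\infty$ bounds, each $\dpr_{k_0}$ costing a factor $\g^{-h}$ and each $\dpr_k$ a factor $\g^{-h/2}$. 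Granting (i)--(iii), \pref{b} follows by repeated integration (resp.\ summation) by parts in $k_0$ (resp.\ $k$): the prefactor $\g^{h/2}$ is just $\g^{3h/2}/\g^h$ (box measure over size of the denominator), and the extra $\g^{h(n_0+n_1/2)}$ comes from the weight $|k_0^{n_0}(\sin k)^{n_1}|\lesssim\g^{h(n_0+n_1/2)}$ which appears when $\dpr_0^{n_0}\tilde\partial_1^{n_1}$ is moved onto $e^{i\kk\xx}$.

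\emph{Geometry and lower bound (the main point).} First I would fix $\l$ so small that $K|\l|$ is below a small absolute constant; then by \pref{hh} the factors $1+z_{h-1},1+\a_{h-1}$ lie in $[\tfrac9{10},\tfrac{11}{10}]$, and $\g^{h^*}|\m_{h-1}|\le\g^{h^*}K|\l|\le\tfrac1{10}\g^h$ since $h^*<h$. By the recursion \pref{bbep} together with \pref{hh}, the running constants on consecutive scales differ by $O(\l)$--small amounts, so the weighted ``norms'' inside $\chi_{h-1}$, inside $\chi_h$, and in $|D_h|$ all agree up to an $O(\l)$ relative error; since $f_h=\chi_h-\chi_{h-1}\ne0$ forces that norm to lie in $[c'\g^{h-1},c''\g^{h+1}]$, one obtains simultaneously $|D_h(\kk)|\ge c\,\g^h$ and $|k_0|\lesssim\g^h$ on $\mathrm{supp}\,f_h$. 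The value of $r$ enters only through the bound on $|k|$: from $\big((1+\a_{h-1})(\cos k-1)+r+\dots\big)^2\lesssim\g^{2h}$ and $|r|\le a_0\g^{h+1}$ — which holds precisely because $h\ge h^*$ — one gets $|\cos k-1|\lesssim\g^h$, hence $|k|\lesssim\g^{h/2}$ using $c|k|^2\le|\cos k-1|\le C|k|^2$ on the support of $\chi_{\le0}$. I expect this to be the delicate step: one must check that in the first regime $r$ is still a subleading perturbation on the shell, so that the shell around $k=0$ has not yet broken up into the two shells around $\pm p_F$; for $h<h^*$ one has instead $|r|\gg\g^h$, $1-\cos k\approx r$, the shell splits, and that regime is treated separately (section~3).

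\emph{Scaled derivative estimates.} Since $\chi_0\in C^\infty$ and $\chi_h$ is a smooth function of $\g^{-h}k_0$ and of $\g^{-h}(\cos k-1)$, with $\cos k-1\sim k^2$ of size $\lesssim\g^h$ on the box, the chain rule gives $|\dpr_{k_0}^{m_0}\dpr_k^{m_1}f_h|\le C_{m_0,m_1}\g^{-m_0 h}\g^{-m_1 h/2}$. Likewise $1/D_h$ is smooth where $|D_h|\gtrsim\g^h$, with $\dpr_{k_0}D_h=O(1)$, $\dpr_k D_h=O(|\sin k|)=O(\g^{h/2})$ and higher derivatives $O(1)$, so the quotient rule gives $|\dpr_{k_0}^{m_0}\dpr_k^{m_1}(1/D_h)|\le C_{m_0,m_1}\g^{-h}\g^{-m_0 h}\g^{-m_1 h/2}$; combining with the weight $k_0^{n_0}(\sin k)^{n_1}$ (whose $k$--derivatives are $O(1)$) one gets, on $\mathrm{supp}\,f_h$, $\big|\dpr_{k_0}^{m_0}\dpr_k^{m_1}\big(k_0^{n_0}(\sin k)^{n_1}\hat g^{(h)}(\kk)\big)\big|\le C\,\g^{h(n_0+n_1/2-1)}\g^{-m_0 h}\g^{-m_1 h/2}$ for all $m_0,m_1$.

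\emph{Decay and assembly.} Writing $\dpr_0^{n_0}\tilde\partial_1^{n_1}\tilde g^{(h)}(\xx)=\int d\kk\,e^{i\kk\xx}\,p(\kk)\hat g^{(h)}(\kk)$ with $|p(\kk)|=|k_0|^{n_0}|\sin k|^{n_1}$, I would integrate by parts $N_0$ times in $k_0$ and sum by parts $N_1$ times in $k$; the latter is legitimate uniformly in $L$ because $f_h$ lives well inside the Brillouin zone, so the $k$--sum may be compared with the continuum integral up to corrections negligible at scale $h$. Each $k_0$--integration by parts produces $|x_0|^{-1}$ and, by the previous step, costs $\g^{-h}$; each $k$--summation by parts produces $|x|^{-1}$ and costs $\g^{-h/2}$. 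With the measure bound from (i) this gives $|\dpr_0^{n_0}\tilde\partial_1^{n_1}\tilde g^{(h)}(\xx)|\le C_{N_0,N_1}\g^{h/2}\g^{h(n_0+n_1/2)}(\g^h|x_0|)^{-N_0}(\g^{h/2}|x|)^{-N_1}$. Taking $N_0=N_1=0$, and for general $N$ choosing $(N_0,N_1)=(N,0)$ when $\g^h|x_0|\ge\g^{h/2}|x|$ and $(0,N)$ otherwise and using that the larger of $\g^h|x_0|,\g^{h/2}|x|$ is at least half their sum, one arrives at \pref{b} via $\min(1,t^{-N})\le 2/(1+t^N)$. All constants originate from $\chi_0$, $\g$, the shell geometry and the bound on $K|\l|$, hence $C_N$ is independent of $K$ — at the price, of course, of shrinking $\l$ as $K$ grows.
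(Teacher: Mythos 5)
Your proposal is correct and follows essentially the same route as the paper: isolate the support of $f_h$ inside the anisotropic box $|k_0|\lesssim\gamma^h$, $|k|\lesssim\gamma^{h/2}$ (measure $\lesssim\gamma^{3h/2}$), show $|D_h|\gtrsim\gamma^h$ there, establish the scaled derivative bounds on $\hat g^{(h)}$ where $\partial_{k_0}$ costs $\gamma^{-h}$ and $\partial_k$ costs $\gamma^{-h/2}$, and trade decay for derivatives in Fourier space. The only presentational difference is that the paper organizes the $\partial_k$--bookkeeping by a Fa\`a di Bruno expansion in $\partial_{\cos k}$ together with a parity count on the $p_i$'s, whereas you do it via the quotient rule with the observation that singleton derivatives of $D_h$ give $\sin k=O(\gamma^{h/2})$ while blocks of size $\geq 2$ give $O(1)$; these are the same counting argument ($\ell\ge 2j-m$), just phrased differently.
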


\begin{proof}
We start observing that the, in the support of $f_h(\kk)$ we have
\be
(1+z_h)|k_0|\le \g^{h+1}a_0\qquad\qquad |(1+\a_{h})(\cos 
k-1)+r+\g^{h}\m_{h}|\le \g^{h+1}a_0\label{hhe}
\ee
From this we get that
\be\label{sin}
|\sin(k)|\le 2\sqrt{1-\cos(k)}\leq 
2\sqrt{\frac{r+K\g^h|\l|+a_0 \g^{h+1}}{1-K|\l|}}\leq C\g^{\frac{h}{2}}.
\ee
so that also $|k|\leq C\g^{\frac{h}{2}}$. It follows that
\be\label{support}
 \int f_{h}(\kk) d\kk \leq C\g^{\frac{3}{2}h}
\ee
To prove the statement for $n_0=n_1=0$, that we just need to show 
that 
\be\label{esti}
|x_0^{N_0}x^{N_1}\tilde 
g^{(h)}(\xx)|\le C
\g^{h\left(\frac{1}{2}-N_0-\frac{N_1}{2}\right)}
\ee
We will use that
\be
 x_0^{N_0}x^{N_1}\tilde g^{(h)}(\xx)=\int d\kk 
e^{i\kk\xx}\partial_0^{N_0}\partial_1^{N_1} \hat g^{(h)}(\kk).
\ee
To estimate the above derivatives we observe that
\be\label{deriv}
\partial_0^{N_0}\partial_1^{N_1} \hat 
g^{(h)}(\kk)=\sum_{P_1=1}^{N_1}\sum_{\sum_ip_i=N_1-P_1}A_{P_1,p_i}\partial_0^{N_0}
\partial^ { P_1 } _ { \cos(k) }\hat g^{(h)}(\kk)
\prod_{i=i}^{P_1}\frac{d^{p_1}}{dk^{p_i}}\sin(k)
\ee
where $A_{P_1,p_i}$ are combinatoric coefficient. It is easy to see that, on 
the domain of $f_h$, we have
\be\label{scaling}
 \left|\partial_0^{N_0}\partial^ { P_1 } _ { \cos(k) }\hat g^{(h)}(\kk)\right|\leq 
C\g^{-h(1+N_0+P_1)}.
\ee
If $P_1\le N_1/2$ we can use
\be\label{prods}
 \left|\prod_{i=i}^{P_1}\frac{d^{p_1}}{dk^{p_i}}\sin(k)\right|\le 1\qquad\hbox{ 
and }\qquad  
\g^{-h(1+N_0+P_1)}\leq \g^{-h\left(1+N_0+\frac{N_1}{2}\right)}
\ee
while, if $P_1> N_1/2$, at least $2P_1-N_1$ of the $p_i$ in the above
product must be zero so that
\begin{align}
\left|\prod_{i=i}^{P_1}\frac{d^{p_1}}{dk^{p_i}}\sin(k)\right|\le& 
C\g^{(2P_1-N_1)h}\crcr  
\g^{-h(1+N_0+P_1)}\g^{h(2P_1-N_1)}\leq& \g^{-h\left(1+N_0+\frac{N_1}{2}\right)}
\end{align}
In both cases we get
\be
|\partial_0^{N_0}\partial_1^{N_1} \hat g^{(h)}(\kk)|\leq 
C\g^{h\left(1+N_0+\frac{N_1}{2}\right)}.
\ee
Combining with \eqref{support} we get \eqref{esti}. Observe now that
\be
 \partial_0^{n_0}\partial_1^{n_1}\tilde g^{(h)}(\xx)=(i)^{n_0+n_1}\int d\kk 
e^{i\kk\xx}k_0^{n_0}\sin^{n_1}k \hat g^{(h)}(\kk).\label{powerk}
\ee
The Lemma follows easily reasoning as above and using \eqref{hhe} for the extra 
powers of $k_0$ and $k$.
\hfill\end{proof}

\subsection{Tree expansion for the effective potentials.}
The effective potential 
$ V^{(h)}(\psi^{(\le h)})$ can be written in terms of a {\it tree expansion}, 
see \cite{G},\cite{BGPS}, defined as follows.

\begin{figure}[ht]
\centering
\includegraphics[width=0.7 \linewidth]{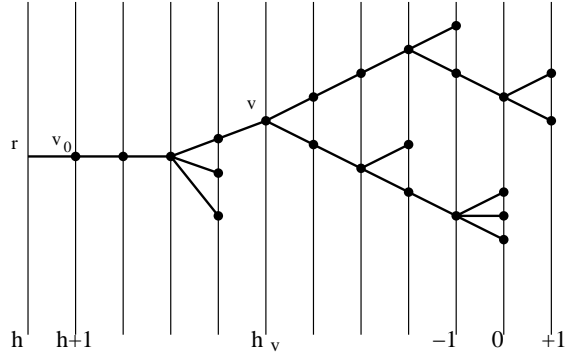}
\vspace{-0.7truecm}
\caption{\label{f1}A tree $\t\in\TT_{h,n}$ with its scale labels.}
\end{figure}

\begin{enumerate}
\item On the plane, we draw the vertical lines at horizontal 
position given by 
the integers from $h$ to 1, see Fig. \ref{f1}. We select one point on the line 
at $h$ ({\it the 
root}) and one point on the line at $h+1$ (the first vertex $v_0$). On the line 
at $k$, with $h+1<k\le 1$, we select $m_k>0$ points (the {\it vertex at scale 
$k$}). We call $M_k$ the set of vertex at scale $k$. To each vertex $v$ in 
$M_k$ we associate exactly one vertex $v'$ in $M_{k-1}$ and we draw a line 
between these two vertices. The vertex $v'$ is called the {\it predecessor} of 
$v$. Finally we require that if $v$ and $w$ are in $M_k$ with $v$ below $w$ 
then $v'$ is below or equal to $w'$. The final results of this procedure is 
clearly a tree with root $r$. 

\item Given a vertex $v$ on scale $k$, let $s_v$ be the number of vertex on 
scale $h+1$ linked to $v$. If $s_v=0$ we say that $v$ is a {\it end point}. The 
number $n$ of endpoint is called the {\it order} of the tree. If $s_v=1$ we say 
that $v$ is a {\it trivial} vertex. Finally if $s_v>1$ we say that $v$ is a 
{\it branching point} or {\it non-trivial} vertex. The tree structure 
induce a natural ordering (denoted by $<$) on the vertex such that if $v_1$ and 
$v_2$ are two vertices and $v_1<v_2$, then $h_{v_1}<h_{v_2}$. We call 
$\TT_{h,n}$ the set of all tree constructed in this way.

\item Given a vertex $v$ of $\t\in\TT_{h,n}$ that is not an endpoint, we can 
consider the subtrees of $\t$ with root $v$, which correspond to the connected 
components of the restriction of $\t$ to the vertices $w> v$. If a subtree 
with root $v$ contains only $v$ and an endpoint on scale $h_v+1$, we will 
call it a {\it trivial subtree}.

\item \label {i5} With each endpoint $v$ we associate one of the monomials 
contributing to\\ 
$\RR_1 {\cal V}^{(0)}(\psi^{(\le h_v-1)})$ and a set $\xx_v$ of space-time
points.

\item We introduce a {\it field label} $f$ to distinguish the field variables 
appearing in the terms associated with the endpoints described in item 
\ref{i5}); the set of field labels associated with the endpoint $v$ will be 
called $I_v$, $\xx(f)$, $\e(f)$ will be the position and type of the 
field variable $f$. Observe that $|I_v|$ is the order of the monomial 
contributing to 
${\cal V}^{(0)}(\psi^{(\le h_v-1)})$ and associated to $v$. Analogously, if $v$ 
is not an endpoint, we shall call $I_v$ the set of field labels associated with 
the endpoints following the vertex $v$; finally we will call the set of point 
$\xx(f)$ for $f\in I_v$ the {\it cluster} associated to $v$.

\end{enumerate}

Given ${\cal U}_i(\psi^{(h)})$ for $i=1,\ldots ,n$ we define the {\it truncated 
expectation} on scale $h$ as
\begin{align}
\EE^T_{h}&
\big[{\cal U}_1(\psi^{(h)});\ldots; {\cal U}_n^{(h)}(\psi^{( 
h)})\big]=\\
&\frac{\partial^n}{\partial\lambda_1\cdots\partial\lambda_n}\log\int 
P\left(d\psi^{(h)}\right)e^{\lambda_1{\cal 
U}_1(\psi^{(h)})+\cdots+\lambda_n{\cal 
U}_n(\psi^{(h)})}\biggr|_{\l_1=\ldots=\l_N=0}.\nn
\end{align}
In terms of above trees, the effective potential ${\cal V}^{(h)}$, $h\le -1$, 
can be written as
\be 
{\cal V}^{(h)}(\psi^{(\le h)}) + \b L \lis e_{h+1}=
\sum_{n=1}^\io\sum_{\t\in\TT_{h,n}}
{\cal V}^{(h)}(\t,\psi^{(\le h)})\;,\label{2.41}
\ee
where, if $v_0$ is the first vertex of $\t$ and $\t_1,\ldots,\t_s$ ($s=s_{v_0}$) 
are the subtrees of $\t$ with root $v_0$, ${\cal V}^{(h)}(\t,\psi^{(\le h)})$ is 
defined inductively as follows:
\begin{enumerate}[i]

\item  if $s>1$, then
\begin{align}
&{\cal V}^{(h)}(\t,\psi^{(\le h)})=\\
&\qquad\qquad\frac{(-1)^{s+1}}{ s!}\EE^T_{h+1}
\big[\bar{\cal V}^{(h+1)}(\t_1,\psi^{(\le h+1)});\ldots; \bar{\cal V}^{(h+1)}
(\t_{s},\psi^{(\le h+1)})\big]\;,\nn\label{2.42}
\end{align}
where $\bar{\cal V}^{(h+1)}(\t_i,\psi^{(\le h+1)})$ is equal to $\RR_1{\cal 
V}^{(h+1)}(\t_i,\psi^{(\le h+1)})$ if the subtree $\t_i$ contains more than one 
end-point, or if it contains one end-point but it is not a trivial subtree; it 
is equal to $\RR_1{\cal V}^{(0)}(\t_i,\psi^{(\le h+1)})$ if $\t_i$ is a trivial 
subtree;

\item if $s=1$ and $\t_1$ is 
not a trivial subtree, then ${\cal V}^{(h)}(\t,\psi^{(\le h)})$ is equal to
\[
\EE^T_{h+1}\big[\RR_1{\cal V}^{(h+1)}(\t_1,\psi^{(\le h+1)})\big].
\]
\end{enumerate}
Using its inductive definition, the right hand side of (\ref{2.41}) can be
further expanded, and in order to describe the resulting expansion we need some
more definitions.

We associate with any vertex $v$ of the tree a subset $P_v$ of $I_v$, the {\it 
external fields} of $v$. These subsets must satisfy various constraints. First 
of all, if $v$ is not an endpoint and $v_1,\ldots,v_{s_v}$ are the $s_v\ge 1$ 
vertices immediately following it, then $P_v \subseteq \cup_i P_{v_i}$; if $v$ 
is an endpoint, $P_v=I_v$. If $v$ is not an endpoint, we shall denote by 
$Q_{v_i}$ the intersection of $P_v$ and $P_{v_i}$; this definition implies that 
$P_v=\cup_i Q_{v_i}$. The union ${\cal I}_v$ of the subsets $P_{v_i}\setminus 
Q_{v_i}$ is, by definition, the set of the {\it internal fields} of $v$, and is 
non empty if $s_v>1$. Given $\t\in\TT_{h,n}$, there are many possible choices of 
the subsets $P_v$, $v\in\t$, compatible with all the constraints. We shall 
denote ${\cal P}_\t$ the family of all these choices and ${\bf P}$ the elements 
of ${\cal P}_\t$.

\def\PP{{\bf P}}

With these definitions, we can rewrite ${\cal V}^{(h)}(\t,\psi^{(\le h)})$ in 
the r.h.s. of (\ref{2.41}) as:
\bea &&{\cal V}^{(h)}(\t,\psi^{(\le
h)})=\sum_{{\bf P}\in{\cal P}_\t}
{\cal V}^{(h)}(\t,{\bf P})\;,\nn\\
&&{\cal V}^{(h)}(\t,{\bf P})=\int d\xx_{v_0}
\widetilde\psi^{(\le h)}(P_{v_0})
K_{\t,\PP}^{(h+1)}(\xx_{v_0})\;,\label{2.43}\eea
where
\be \widetilde\psi^{(\le h)}
(P_{v})=\prod_{f\in P_v}\psi^{ (\le
h)\e(f)}_{\xx(f)}\label{2.44}\ee
and $K_{\t,\PP}^{(h+1)}(\xx_{v_0})$ is defined inductively by the equation, 
valid for any $v\in\t$ which is not an endpoint,
\be K_{\t,\PP}^{(h_v)}(\xx_v)=\frac{1}{ s_v !}
\prod_{i=1}^{s_v} [K^{(h_v+1)}_{v_i}(\xx_{v_i})]\; \;\EE^T_{h_v}[
\widetilde\psi^{(h_v)}(P_{v_1}\setminus Q_{v_1}),\ldots,
\widetilde\psi^{(h_v)}(P_{v_{s_v}}\setminus
Q_{v_{s_v}})]\;,\label{2.45}\ee
where $\widetilde\psi^{(h_v)}(P_{v_i}\setminus Q_{v_i})$ has a definition 
similar to (\ref{2.44}). Moreover, if $v_i$ is an endpoint 
$K^{(h_v+1)}_{v_i}(\xx_{v_i})$ is equal to one of the kernels of the monomials 
contributing to $\RR_1{\cal V}^{(0)}(\psi^{(\le h_v)})$; if $v_i$ is not an 
endpoint, $K_{v_i}^{(h_v+1)}=K_{\t_i,\PP_i}^{(h_v+1)}$, where ${\bf P}_i=\{P_w, 
w\in\t_i\}$.

The final form of our expansions not yet given by (\ref{2.41})--(\ref{2.45}). 
We can further 
decompose ${\cal V}^{(h)}(\t,\PP)$, by using the following representation of the 
truncated expectation in the r.h.s. of (\ref{2.45}). Let us put $s=s_v$, 
$P_i\=P_{v_i}\setminus Q_{v_i}$; moreover we order in an arbitrary way the sets 
$P_i^\pm\=\{f\in P_i,\e(f)=\pm\}$, we call $f_{ij}^\pm$ their elements and we 
define $\xx^{(i)}=\cup_{f\in P_i^-}\xx(f)$, $\yy^{(i)}=\cup_{f\in P_i^+}\xx(f)$, 
$\xx_{ij}=\xx(f^-_{ij})$, $\yy_{ij}=\xx(f^+_{ij})$. Note that $\sum_{i=1}^s 
|P_i^-|=\sum_{i=1}^s |P_i^+|\=n$, otherwise the truncated expectation vanishes.

Then, we use the {\it Brydges-Battle-Federbush} \cite{B,GK,Le} formula saying 
that, up to a sign, if $s>1$,
\be \EE^T_{h}(\widetilde\psi^{(h)}(P_1),\ldots,
\widetilde\psi^{(h)}(P_s))=\sum_{T}\prod_{l\in T}
g^{(h)}(\xx_l-\yy_l)
\int dP_{T}({\bf t})\; {\rm det}\, G^{h,T}({\bf t})\;,\label{2.46}\ee
where $T$ is a set of lines forming an {\it anchored tree graph} between the 
clusters associated with $v_i$ 
that is $T$ is a set of lines, 
which becomes a tree graph if one identifies all the points in the same cluster. 
Moreover ${\bf t}=\{t_{ii'}\in [0,1], 1\le i,i' \le s\}$, $dP_{T}({\bf t})$ is a 
probability measure with support on a set of ${\bf t}$ such that $t_{ii'}={\bf 
u}_i\cdot{\bf u}_{i'}$ for some family of vectors ${\bf u}_i\in \RRR^s$ of unit 
norm. Finally $G^{h,T}({\bf t})$ is a $(n-s+1)\times (n-s+1)$ matrix, whose 
elements are given by
\be G^{h,T}_{ij,i'j'}=t_{ii'}g^{(h)}(\xx_{ij}-\yy_{i'j'})\;,
\label{2.48}\ee
with $(f^-_{ij}, f^+_{i'j'})$ not belonging to $T$. In the following we shall 
use (\ref{2.44}) even for $s=1$, when $T$ is empty, by interpreting the r.h.s. 
as equal to $1$, if $|P_1|=0$, otherwise as equal to ${\rm det}\,G^{h}= 
\EE^T_{h}(\widetilde\psi^{(h)}(P_1))$. It is crucial to note that $G^{h,T}$ is a 
Gram matrix, i.e., the matrix elements in (\ref{2.48}) can be written in terms 
of scalar products:
\bea
&& t_{ii'}
g^{(h)}(\xx_{ij}-\yy_{i'j'})
=\label{2.48a}\\
&&\hskip.3truecm
=\Big({\bf u}_i\otimes
A(\xx_{ij}-\cdot)\;,\ {\bf u}_{i'}\otimes B(\xx_{i'j'}-\cdot)\Big)\=({\bf f}_\a,{\bf g}_\b)
\nn
\;,
\eea
where
\bea
&&A(\xx)=\int d\kk
e^{-i\kk\xx}\sqrt{f_h(\kk)}
\frac{\sqrt{\left|\hat D_h(\kk)\right|}}{\hat D_h(\kk)}
\;,\\
&&B(\xx)=\int d\kk
e^{-i\kk\xx}\sqrt{f_h(\kk)}\frac{1}{\sqrt{\left|\hat D_h(\kk)\right|}}\;.\nn
\label{2.48b}
\eea
where $\hat D_h(\kk)=-i k_0(1+z_h)+(1+\a_h)\cos k-1+r+\g^{h}\m_h$.
The symbol $(\cdot,\cdot)$ denotes the inner product, i.e.,
\be
\big({\bf u}_i\otimes A(\xx-\cdot),
{\bf u}_{i'}\otimes B(\xx'-\cdot)\big)
=({\bf u}_i\cdot{\bf u}_{i'})\,\,\cdot\int d\zz A^*(\xx-\zz)B(\xx'-\zz)\;,
\label{2.48c}\ee
and the vectors ${\bf f}_\a,{\bf g}_\b$ with $\a,\b=1,\ldots,n-s+1$ are 
implicitly defined by (\ref{2.48a}). The usefulness of the representation 
(\ref{2.48a}) is that, by the Gram-Hadamard inequality, $|\det({\bf f}_\a,{\bf 
g}_\b)|\le \prod_{\a}||{\bf f}_\a||\,||{\bf g}_\a||$. In our case, $||{\bf 
f}_\a||,||{\bf g}_\a||\le C\g^{h/4}$ as it easily follows along the 
line of the proof of Lemma \ref{L1}. Therefore, $||{\bf f}_\a||\, 
||{\bf g}_\a||\le C \g^\frac{h}{ 2}$, uniformly in $\a$, so that the Gram 
determinant 
can be bounded by $C^{n-s+1}\g^{\frac{h}{ 2}(n-s+1)}$.
\\

If we apply the expansion (\ref{2.46}) in each vertex of $\t$ different from the 
endpoints, we get an expression of the form
\be {\cal V}^{(h)}(\t,\PP) = \sum_{T\in {\bf T}} \int d\xx_{v_0}
\widetilde\psi^{(\le h)}(P_{v_0}) W_{\t,\PP,T}^{(h)}(\xx_{v_0})
\= \sum_{T\in {\bf T}}
{\cal V}^{(h)}(\t,\PP,T)\;,\label{2.49}\ee
where ${\bf T}$ is a special family of graphs on the set of points $\xx_{v_0}$, 
obtained by putting together an anchored tree graph $T_v$ for each non trivial 
vertex $v$. Note that any graph $T\in {\bf T}$ becomes a tree graph on 
$\xx_{v_0}$, if one identifies all the points in the sets $\xx_v$, with $v$ an 
endpoint.

\subsection{Analyticity of the effective potentials.} 

Our next goal is the proof of the following result.\\

\begin{lemma}\label{L2} There exists a constants $\l_0>0$, independent of $\b$, 
$L$ and $r$, such that the kernels $W^{(h)}_{l}$ in the domain $|\l|\le \l_0$, 
are analytic function of $\l$ and satisfy for $h\ge h^*$
\be \frac1{\b L}\int d\xx_1\cdots d\xx_{l}|W^{(h)}_{l}
(\xx_1,\ldots,\xx_{l})|\le
\g^{h \left(\frac{3}{2}-\frac{l}{4}\right)}\g^\th
\,(C|\l|)^{max(1,l-1)}\label{2.52}\ee
with $\th=\frac{1}{4}$.
\end{lemma}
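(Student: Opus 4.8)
The strategy is the standard power-counting argument for tree expansions, using the dimensional estimates from Lemma \ref{L1}, but with careful bookkeeping of the gain factors coming from the $\RR_1$ operator so that all terms with more than two fields are seen to be irrelevant. First I would fix $\t\in\TT_{h,n}$, a choice $\PP\in\PP_\t$ of external fields, and a spanning graph $T\in\bT$, and bound $|{\cal V}^{(h)}(\t,\PP,T)|$ by integrating over the positions $\xx_{v_0}$. Each propagator $\tilde g^{(h_v)}$ on an internal line of a cluster at scale $h_v$ contributes, via \eqref{b} with $n_0=n_1=0$, a factor $\g^{h_v/2}$ together with a rapidly decaying kernel that controls the spatial integrations; each Gram determinant in \eqref{2.46} is bounded by $C^{n-s+1}\g^{\frac{h_v}{2}(n-s+1)}$ as noted after \eqref{2.48c}. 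Doing the integrals along the anchored tree graphs and over the cluster structure produces, after the usual combinatorial resummation, a bound of the form $(C|\l|)^n$ times a product over the vertices $v$ of $\t$ of scaling factors $\g^{h_v d(v)}$, where $d(v)=-\frac{3}{2}+\frac{|P_v|}{4}$ is the naive scaling dimension of a cluster with $|P_v|$ external legs (coming from $\g^{\frac{3}{2}h_v}$ per integration vertex, as in \eqref{support}, against the propagator and Gram factors).

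The core of the argument is that this naive dimension makes the $|P_v|=2$ and $|P_v|=4$ clusters relevant/marginal, and one must show the $\RR_1$ operation converts this into summable behavior. For the two-legged clusters, the subtractions in $H^-_{\xx_1,\xx_2}$ — first and second differences — produce, via the interpolation formula \eqref{D} and its analogue, zeros of the form $(\xx_2-\xx_1)$, $(\xx_2-\xx_1)^2$ acting on the kernel; combined with the short-range decay of the propagators these factors are dimensionally worth $\g^{(h_v - h_{v'})}$ and $\g^{2(h_v-h_{v'})}$ respectively relative to the scale $h_{v'}$ of the vertex where the external legs are contracted, which upgrades $d(v)=-1/2$ (before renormalization) to an effective $+1/2$ after extracting $\LL_1$, since the relevant and marginal parts have been removed and the remainder $\RR_1\VV_2$ carries the extra difference. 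For the four- and six-legged clusters one uses \eqref{R2}, \eqref{R3}: the identity \eqref{p} shows $\LL_1$ annihilates them, so $\RR_1\VV_4=\VV_4$ and $\RR_1\VV_6=\VV_6$ are already equal to the original terms, but the rewriting in terms of $D^\e_{\xx_2,\xx_1}$ exposes a derivative acting on $\psi$, which on scale $h_v$ is worth $\g^{h_v/2}$ (by \eqref{powerk} / \eqref{b} with $n_1=1$), improving the dimension by $1/2$ per $D$ factor — enough to make the $|P_v|=4$ and $|P_v|=6$ contributions irrelevant. Summing the resulting geometric series over all scales $h\le h_v\le 1$ along the tree, using that each non-endpoint vertex has dimension strictly negative after renormalization, gives convergence; the surplus in the exponents is what produces the extra $\g^{\th}$ with $\th=1/4$ in \eqref{2.52}.

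Finally I would close the induction: the bound \eqref{2.52} is assumed at scales $>h$ and the expansion \eqref{2.41}--\eqref{2.49} expresses $W^{(h)}_l$ through truncated expectations of kernels on higher scales; feeding in the inductive bound and the propagator estimates of Lemma \ref{L1}, the above power counting reproduces \eqref{2.52} at scale $h$, provided $|\l|\le \l_0$ with $\l_0$ chosen to beat the combinatorial constants $C^n$ from the number of trees, of choices of $\PP$, and of anchored graphs $T$ (all bounded by $C^n n!$-free factors after the standard Gallavotti--Nicolò counting). Analyticity in $\l$ is then immediate from uniform convergence of the power series in $\l$ on $|\l|\le\l_0$. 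The main obstacle is the renormalization step: one must verify that after subtracting $\LL_1\VV^{(h)}$ the only surviving local quadratic part has been reabsorbed into the dressed propagator (the running constants $z_h,\a_h,\mu_h$), so that the hypothesis \eqref{hh} of Lemma \ref{L1} is propagated — this requires checking that the flow of $z_h,\a_h,\mu_h$ stays within $K|\l|$ for $h\ge h^*$, which follows because each step adds a contribution bounded by $(C|\l|)\g^{\th(h^*-h)}$-type terms that sum geometrically, but must be tracked simultaneously with \eqref{2.52} in a combined inductive statement.
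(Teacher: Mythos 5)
Your overall power-counting strategy matches the paper's: tree expansion, Gram--Hadamard bound for the determinant, dimensional gains from $\RR_1$ for $|P_v|=2,4,6$, and the combinatorial resummation over trees, sets $P_v$ and anchored graphs $T$. You also correctly identify that the flow of the running constants $z_h,\a_h,\mu_h$ must be controlled simultaneously with \eqref{2.52} in a combined induction, which is precisely how the paper closes the loop through the auxiliary bounds \eqref{as1}--\eqref{as2}.

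However, there is a genuine gap in your treatment of $\mu_h$. You claim that the flow of $z_h,\a_h,\mu_h$ ``sums geometrically'' and stays $O(|\l|)$ by the same mechanism as for $z_h$ and $\a_h$. This is true for $z_h$ and $\a_h$ (their increments are $\partial_0\hat W_2^{(h)}(0)$ and $\tfrac12\partial_1^2\hat W_2^{(h)}(0)$, each $O(|\l|\g^{\th h})$ by \eqref{as1}, which sums to $O(|\l|)$), but fails for $\mu_h$. The local 2--field kernel $\hat W_2^{(h)}(0)$ is a \emph{relevant} coupling: the naive bound from \eqref{2.52} is only $|\hat W_2^{(h)}(0)|\le C|\l|\g^{(1+\th)h}$, and feeding this into $\g^{h-1}\mu_{h-1}=\sum_{k=h}^1\hat W_2^{(k)}(0)$ gives $|\mu_{h}|\lesssim |\l|\g^{-h}$, which at the bottom of the first regime $h\approx h^*$ is of size $|\l|\g^{-h^*}\sim|\l|/|r|$ --- not uniformly bounded in $r$, exactly the divergence the lemma is supposed to eliminate. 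In other words, your geometric-sum argument shows the series over $k$ converges, but does not control the $\g^{-h}$ in front of it, which is the dangerous factor.

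What is missing is the cancellation that the paper invokes to prove \eqref{as2}: one writes $\hat W_2^{(h)}(0)=\hat W_{2,a}^{(h)}(0)+\hat W_{2,b}^{(h)}(0)$, where $\hat W_{2,a}^{(h)}$ is computed at $r=0$ and $\hat W_{2,b}^{(h)}$ carries an explicit factor $|r|$ from the propagator remainder, and then one proves the global identity $\sum_{h=-\io}^1\hat W_{2,a}^{(h)}(0)=0$. This identity is established by choosing an equivalent ultraviolet regularization based on $\th_M(x_0)$ so that every Feynman graph contributing to $\hat W_{2,a}$ contains a tadpole or a closed fermion loop that vanishes (a Pauli-principle / normal-ordering effect); it converts the divergent partial sum $\sum_{k=h}^1\hat W_{2,a}^{(k)}(0)$ into $-\sum_{k\le h-1}\hat W_{2,a}^{(k)}(0)=O(|\l|\g^{(1+\th)h})$, while the $|r|$ factor makes $\sum_k\hat W_{2,b}^{(k)}(0)=O(|\l||r|)$. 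Only with both ingredients does $|\mu_h|\le K|\l|$ hold uniformly for $h\ge h^*$ and hence does the hypothesis \eqref{hh} of Lemma~\ref{L1} propagate through the induction. Without this your argument breaks down at the first scale where $\g^h$ becomes comparable with $|r|$, which is precisely the regime the lemma needs to reach.
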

\vskip.3cm
\begin{proof} 
The proof is done by induction. We assume that for $k\ge h+1$
\pref{2.52} holds together with
\be
 \int d\xx(|x_0|+|x_1|^2) |W_2^{(k)}(\xx)|\le
C|\lambda|\g^{\th k}\label{as1}
\ee
and
\be
\int d\xx |W_2^{(k)}(\xx)|\leq 
C|\lambda| |r|\g^{\th k}.
\label{as2}
\ee
The validity of \pref{as1} and \pref{as2} implies \pref{b}. 

We now prove that the validity of \pref{2.52}, \pref{as1} and \pref{as2}.
Using the tree expansion described above and, in particular, (\ref{2.41}), 
(\ref{2.43}), (\ref{2.49}), we find that the l.h.s. of (\ref{2.52}) can be 
bounded above by
\begin{align} 
\sum_{n\ge 1}\sum_{\t\in {\cal T}_{h,n}}&
\sum_{\substack{\PP\in{\cal P}_\t\\ |P_{v_0}|=l}}\sum_{T\in{\bf T}}
C^n  \left[\prod_{i=1}^n
C^{p_i} |\l|^{\,\frac{p_i}2-1}\right]
\left[\prod_{v\ {\rm not}\ {\rm e.p.}} 
\frac{1}{s_v!}
\g^{{h_v}\left(\sum_{i=1}^{s_v}\frac{|P_{v_i}|}{ 4}-\frac{|P_v|}{ 
4}-\frac{3}{2}(s_v-1)\right)}\right]\cdot\label{2.57}\\
&\left[\prod_{v\ {\rm not}\ {\rm e.p.}}
\g^{-(h_v-h_{v'})z_1(P_v)}\right] 
\left[\prod_{v\ {\rm e.p.},
|I_v|=4,6} \g^{ h_{v'} \frac{|I_v|-2}{ 2 }} \right] \left[\prod_{v\
{\rm e.p.}, |I_v|=2} \g^\frac{ 3h_{v'}}{ 2} \right] \nn 
\end{align}
where $z_1(P_v)=2$ for $|P_v|=6$, $z_1(P_v)=1$ for $|P_v|=4$ and
$z_1(P_v)=\frac{3}{ 2}$ for $|P_v|=2$. 
Note the role of the $\RR_1$ operation in the above bound;
if we neglect $\RR_1$ we can get a similar bound where the second line of 
eq.\eqref{2.57} 
is simply replaced by $1$. Its proofs is an immediate consequence of 
the Gram--Hadamard inequality
\be |{\rm det} G^{h_v,T_v}({\bf t}_v)| \le
c^{\sum_{i=1}^{s_v}|P_{v_i}|-|P_v|-2(s_v-1)}\cdot\;
\g^{\frac{h_v}{ 2}
\left(\sum_{i=1}^{s_v}\frac{|P_{v_i}|}{ 2}-\frac{|P_v|}{ 
2}-(s_v-1)\right)}\;.\label{2.54}\ee
and of the decay properties of $g^{(h)}(\xx)$, implying
\be \prod_{v\ {\rm not}\ {\rm e.p.}}
\frac{1}{ s_v!}\int \prod_{l\in T_v} d(\xx_l-\yy_l)\,
||g^{(h_v)}(\xx_l-\yy_l)||\le c^n \prod_{v\ {\rm not}\ {\rm e.p.}}
\frac{1}{s_v!} \g^{-h_v(s_v-1)}\;.\label{2.55}\ee
If we take into account the subtraction to the 2 field terms and rewriting of 
the 4 and 6 fields terms involved in the $\RR_1$ operation we obtain the extra 
factor 
\[
\Big[\prod_{v\ {\rm not}\ {\rm e.p.}} \g^{-(h_v-h_{v'})z_1(P_v)}\Big] 
\Big[\prod_{v\ {\rm e.p.}, |I_v|=4,6} \g^{ h_{v'} \frac{|I_v|-2}{ 2 }} \Big] 
\Big[\prod_{v {\rm e.p.}, |I_v|=2} \g^\frac{ 3h_{v'}}{ 2} \Big]
\]
which is 
produced by the extra zeros and derivatives in the fields $D_{\xx_i,\xx_j}$ 
(when written as in the last of \pref{D}) and  $H_{\xx_1,\xx_2}$; each time or 
space derivative produce a gain $\g^{h_{v'}}$ or $\g^{h_{v'}/2}$ respectively 
while the zeros can be associated to the propagators in the anchored tree 
$T$ (for vertices that are not end points) or to the kernels in $V^{(0)}$ (for 
the end points) producing a loss bounded by $\g^{-h_{v}}$ or $\g^{-h_{v}/2}$. 
While the origin of such factors can be easily understood by the above 
dimensional considerations, some care has to be taken to obtain such 
gains, related to the presence of the interpolated points and to avoid "bad"  
extra factorials; we refer for instance to section 3 of \cite{BM1} where a 
similar 
bound in an analogous case is derived  with all details.

Once the bound \pref{2.57} is obtained, we have to see if we can sum over 
the scales and the trees. Let us define $n(v)=\sum_{i: v_i^*>v}\,1$ as the 
number of endpoints following $v$ on $\t$. 
Recalling that $|I_v|$ is the number of field labels 
associated to the endpoints following $v$ on $\t$ 
and using that
\begin{align}
& \sum_{v\ {\rm not}\ {\rm e.p.}}\left[\sum_{i=1}^{s_v}
|P_{v_i}|-|P_v|\right]=|I_{v_0}|-|P_{v_0}|\;,\nn\\
&\sum_{v\ {\rm not}\ {\rm e.p.}}(s_v-1)=n-1\;\,\label{56}\\
& \sum_{v\ {\rm not}\ {\rm e.p.}}
(h_v-h)\left[\sum_{i=1}^{s_v}
|P_{v_i}|-|P_v|\right]=\sum_{v\ {\rm not}\ {\rm e.p.}}
(h_v-h_{v'})(|I_v|-|P_v|)\;,\nn\\
&\sum_{v\ {\rm not}\ {\rm e.p.}}(h_v-h)(s_v-1)=
\sum_{v\ {\rm not}\ {\rm e.p.}}(h_v-h_{v'})(n(v)-1)\;,\nn
\end{align}
we find that (\ref{2.57}) can be bounded above by
\begin{align}
\sum_{n\ge 1}\sum_{\t\in {\cal T}_{h,n}}
\sum_{\substack{\PP\in{\cal P}_\t\\ |P_{v_0}|=2l}}\sum_{T\in{\bf T}}
C^n&  \g^{h\left(\frac{3}{ 2}-\frac{1}{ 4}|P_{v_0}|+\frac{1}{ 
4}|I_{v_0}|-\frac{3}{ 2}n\right)} \left[\prod_{i=1}^n C^{p_i} 
|\l|^{\,\frac{p_i}2-1}\right]\crcr
&\left[\prod_{v\ {\rm not}\ {\rm e.p.}} \frac{1}{s_v!}
\g^{(h_v-h_{v'})\left(\frac{3}{ 2}-\frac{|P_v|}{ 4}+\frac{|I_v|}{ 4}-\frac{3}{ 
2}n(v)+z_1(P_v)\right)}\right]\nn\\
&\left[\prod_{v\ {\rm e.p.}, |I_v|=4,6} \g^{ h_{v'} \frac{|I_v|-2}{
2 }} \right] \left[\prod_{v\ {\rm e.p.}, |I_v|=2} \g^\frac{3 h_{v'}}{
2} \right]
\end{align}
Using the identities
\begin{align} 
\g^{h  n}
\prod_{v\ {\rm not}\ {\rm e.p.}}
\g^{(h_v-h_{v'}
) n(v)}&=\prod_{v\ {\rm e.p.}}
\g^{h_{v'}}\;,\nn\\
\g^{h  |I_{v_0}|}
\prod_{v\ {\rm not}\ {\rm e.p.}}
\g^{(h_v-h_{v'}) |I_v|}&=\prod_{v\ {\rm e.p.}}
\g^{h_{v'} |I_v|}\;,\label{2.60}
\end{align}
we obtain
\begin{align}
\frac1{\b L}\int d\xx_1\cdots &d\xx_{l}|W^{(h)}_{l}
(\xx_1,\ldots,\xx_{l})|\le \crcr
&\sum_{n\ge 1}\sum_{\t\in {\cal
T}_{h,n}} \sum_{\substack{\PP\in{\cal P}_\t\\ |P_{v_0}|=2l}}\sum_{T\in{\bf
T}} C^n  \g^{h\left(\frac{3}{2}-\frac{l}{4}\right)} \left[\prod_{i=1}^n
C^{p_i} |\l|^{\,\frac{p_i}2-1}\right]\cdot\label{2.61}\\
&\cdot \left[\prod_{v\ {\rm not}\ {\rm e.p.}}
\frac{1}{s_v!} 
\g^{-(h_v-h_{v'})\left(\frac{|P_v|}{4}-\frac{3}{2}+z_1(P_v)\right)}\right]
\cdot\nn\\
&\cdot\left[\prod_{v\ {\rm e.p.}, |I_v|>6} 
\g^{h_{v'}\left(\frac{|I_v|}{4}-\frac{3}{2}\right)}
\right] \left[\prod_{v\ {\rm e.p.}, |I_v|=2} \g^{\frac{h_{v'}}{2}} \right]
\left[\prod_{v\ {\rm e.p.}, |I_v|=4,6} \g^{ h_{v'} \frac{3| I_v|-10}{4}}
\right] \nn 
\end{align}
Note that,
\be 
\left[\prod_{v\ {\rm e.p.}, |I_v|>6} 
\g^{h_{v'}\left(\frac{|I_v|}{4}-\frac{3}{2}\right)}
\right] \left[\prod_{v\ {\rm e.p.}, |I_v|=2} \g^{ \frac{h_{v'}}{2}} \right]
\left[\prod_{v\ {\rm e.p.}, |I_v|=4,6} \g^{ h_{v'} 
\frac{3|I_v|-10}{4}}\right]\le \g^{\frac{\bar h}{2}} \label{2.61a}\;,\ee
with $\bar h$ the highest scale label of the tree. Since
\be
\frac{|P_v|}{4}-\frac{3}{2}+z_1(P_v) \ge \frac{1}{2}\label{101}
\ee
we see that
\begin{align}
&\left[\prod_{v\ {\rm not}\ {\rm e.p.}} \frac{1}{s_v!}
\g^{-(h_v-h_{v'})\left(\frac{|P_v|}{4}-\frac{3}{2}+z_1(P_v)\right)}\right]
\g^{\frac{\bar h}{2}}\le\crcr
&\qquad\left[\prod_{v\ {\rm not}\ {\rm e.p.}}
\frac{1}{s_v!} 
\g^{-(h_v-h_{v'})\eta\left(\frac{|P_v|}{4}-\frac{3}{2}+z_1(P_v)\right)}
\right] \g^{h\frac{(1-\eta)}{2}}. 
\end{align}
for any $0<\eta<1$. On the other hand we have that
\be
 \frac{|P_v|}{4}-\frac{3}{2}+z_1(P_v) \ge \frac{|P_v|}{16}
\ee
so that, using also eq.\eqref{101}, we get
\be
\prod_{v\ {\rm not}\ {\rm e.p.}}
\frac{1}{s_v!} 
\g^{-(h_v-h_{v'})\eta\left(\frac{|P_v|}{4}-\frac{3}{2}+z_1(P_v)\right)}
\leq \left[\prod_{v\ {\rm not}\ {\rm e.p.}} \frac{1}{s_v!}
\g^{-\frac{\eta}{4}(h_v-h_{v'})}\right]\left[\prod_{v\ {\rm not}\ {\rm e.p.}}
\g^{-\frac{\eta}{32}|P_v|}\right]
\ee
Collecting the above estimates and using that the number of terms in 
$\sum_{T\in {\bf T}}$ is bounded by $C^n\prod_{v\ {\rm not}\ {\rm e.p.}} s_v!$ 
we obtain
\begin{align}
 \frac1{\b L}\int d\xx_1\cdots d\xx_{l}|&W^{(h)}_{l}
(\xx_1,\ldots,\xx_{l})|\le \g^{h\left(\frac{3}{2}-\frac{l}{4}\right)}\g^{ 
\frac{1-\eta}{2}h}
\sum_{n\ge 1}C^n\sum_{\t\in {\cal T}_{h,n}}\label{2.61b}\left[\prod_{i=1}^n
C^{p_i} |\l|^{\,\frac{p_i}2-1}\right]\cdot\crcr
&\cdot\left[\prod_{v\ {\rm not}\ {\rm e.p.}}
\g^{-(h_v-h_{v'})\frac{\eta}{4}}\right]
\sum_{\substack{\PP\in{\cal P}_\t\\|P_{v_0}|=2l}}\left[\prod_{v\ {\rm not}\ 
{\rm e.p.}}
\g^{-\frac{|P_v|}{64}}\right]\;.
\end{align}
\vskip.3cm
\0{\bf Remark}: eq. \pref{2.61b} says that a gain $\g^{\frac{\bar h}{2}}$ at 
the 
scale of the endpoint, see \pref{2.61a}, implies a gain 
$\g^{h\frac{1-\eta}{2}}$ at the 
root scale, as consequence of the fact that the renormalized scaling dimension 
of all vertices of the trees is strictly positive and $\ge \frac{1}{2}$; this 
property, which will be extensively  used below, is called {\it short memory 
property}. 
\vskip.3cm

The sum over $\PP$ can be bounded using the following combinatorial
inequality: let $\{p_v, v\in \t\}$,
with $\t\in\TT_{h,n}$, be a set of integers such that
$p_v\le \sum_{i=1}^{s_v} p_{v_i}$ for all $v\in\t$ which are not endpoints;
then, if $\a>0$,
$$\prod_{\rm v\;not\; e.p.} \sum_{p_v} \g^{-{\a p_v}}
\le C_\a^n\;.$$
This implies that
\[
\sum_{\substack{\PP\in{\cal P}_\t\\
|P_{v_0}|=2l}}\left[\prod_{v\ {\rm not}\ {\rm e.p.}}
\g^{-|P_v|\frac{\eta}{32}}\right]\prod_{i=1}^n
C^{p_i} |\l|^{\,\frac{p_i}2-1}\le C^n|\l|^n\;.
\]
Finally
$$\sum_{\t\in {\cal T}_{h,n}}
\prod_{v\ {\rm not}\ {\rm e.p.}}
\g^{-(h_v-h_{v'})\frac{\eta}{4}}\le C^n\;,$$
as it follows from the fact that the number of non trivial vertices in $\t$
is smaller than $n-1$ and that the number of trees in ${\cal T}_{h,n}$ is
bounded by ${\rm const}^n$. Altogether we obtain
\be 
\frac1{\b L}\int d\xx_1\cdots d\xx_{l}|W^{(h)}_{l}
(\xx_1,\ldots,\xx_{l})|\le 
\g^{h\left(\frac{3}{2}-\frac{l}{4}\right)}\g^{\th h} \sum_{n\ge
1}C^n |\l|^n\;,\label{2.61e}
\ee
where we have set $\th=(1-\eta)/2$. Moreover we choose $\eta=\frac12$ so that 
$\th=\frac14$.
Once convergence is established, the limit $L,\b\to\io$ is a straightforward 
consequence, see for instance section 2 of \cite{BFM1}.

In order to complete the proof we need to show the validity of the inductive 
assumption \pref{as1}\pref{as2}. It is clearly true for $h=1$; moreover, 
by the bound \pref{2.61e} we get \pref{as1}. We have finally to prove \pref{as2}.
We can write $g^{(h)}(\xx)=g^{(h)}|_{r=0}(\xx)+r^{(h)}(\xx)$ where 
$g^{(h)}|_{r=0}$ is the single scale propagator of the $r=0$ case and $r^{(h)}$ 
satisfkies
\[
 \left|\partial_0^{n_0} \tilde\partial_1^{n_1}\tilde g^{(h)}(\xx)\right|\le
C_N\frac{|r|\g^{-\frac{h}{2}}}{1+[\g^h|x_0|+\g^{\frac{h}{2}}|x|]^N}\g^{
h(n_0+n_1/2)}
\]
that is the same bound \pref{b} with an extra $\frac{|r|}{\g^h}$. We can 
therefore 
write
\be
\hat W^{(h)}_2(0)=\hat W^{(h)}_{2,a}(0)+\hat W^{(h)}_{2,b}(0)
\ee
where $\hat W^{(h)}_{2,a}(0)$ is the effective potential of the $r=0$ case. We 
will 
show below that $\sum_{h=-\io}^1 \hat W^{(h)}_{2,a}(0)=0$ and as a consequence 
$|\sum_{h=k}^1 \hat W^{(h)}_{2,a}(0)|\le C|\l|\g^{(1+\th)k}$ as 
$|\hat W^{(h)}_{2,a}(0)|\le C|\l|\g^{(1+\th)h}$. On the other hand 
$|\hat W^{(h)}_{2,b}(0)|\le C|\l| |r| \g^{\th h}$ so that 
\be
\g^{h-1}\m_{h-1}=\g^h\m_h+\hat W^{(h)}_{2}(0)
\ee
hence $\g^{h-1}\m_{h-1}=\sum_{h=h}^1\hat W^{(k)}_{2}(0)$ and
$|\m_h|\le C|\l|$. 

It remains to prove that $\sum_{h=-\io}^1 \hat W^{(h)}_{2,a}(0)=0$.
This can be checked noting that in the $r=0$ case it is more natural to consider
the following ultraviolet regularization, instead of \pref{pro}
at $\b=\io$
\be
g(\xx)=\th_M(x_0)\int_{-\pi}^{\pi} dk e^{i k x+\e(k) x_0} 
\ee
with $\e(k)=\cos(k)-1$ and $\th_M(x_0)$ is a smooth function with support in 
$(\g^{-M},+\io)$; note that $g(\xx)$ verifies \pref{pro1}. We can write 
$g(\xx)=g^{(u.v.)}(\xx)+g^{(i.r.)}(\xx)$ with $g^{(u.v.)}(\xx)=h(x_0) g(\xx)$ 
and $g^{(i.r.)}(\xx)=(1-h(x_0)) g(\xx)$, and $h(x_0)$ a smooth function $=1$ is 
$|x_0|<1$ and $=0$ if $|x_0|>\g$. The integration of the ultraviolet part can be 
done as in section 3 of \cite{BGPS}, writing $\th_M(x_0)$ as sum of compact 
support 
functions. After that, the limit $M\to\io$ can be taken, and we can write 
$g^{(i.r.)}(\xx)=\sum_{h=-\io}^{-1} g^{(h)}(\xx)$ with 
\be
g^{(h)}(\xx)=\th(x_0)(1-h(x_0))\int_{-\pi}^{\pi} dk c_h(k)e^{i k x+\e(k) x_0} 
\ee
with $c_h(k)$ a smooth function non vanishing for $\pi\g^{h-1}\le |k|\le \pi 
\g^{h+1}$; note that $g^{(h)}(\xx)$ verifies \pref{b}, and the integration of 
the infrared scales is essentially identical to the one described in this 
sections. Once all scales are integrated out, we obtain kernels 
$W^{(-\io)}_{n,m}$ coinciding with the ones obtained before; however with this 
choice of the ultraviolet cut-off, $W^{(-\io)}_{2,0}\equiv 0$ is 
an immediate consequence of the presence of the $\th_M(x_0)$ in the propagator. 
Indeed the kernels can be written as sum over Feynman graphs, which contain 
surely a closed fermionic loop or a tadpole (the interaction is local in time).
\end{proof}

\subsection{The 2-point Schwinger function in the insulating phase.}

In the case 
$r=0$ we have $h^*=-\io$ and the integration considered in this section 
conclude the construction of the effective potential. 
Similarly, if $r<0$ and $|\l|$ is small then $g^{(< h^*)}\equiv 0$, so that 
again the construction of the effective potential is concluded by the 
integration on scale $h^*$.

In both case the analysis described above can be easily extended 
to take into account the external fields, that is $\phi\not=0$
(see for instance section 3.4 of \cite{BM1} for details in a similar case).
The 2-point Schwinger function can be written as, if we define 
$h_\kk={\rm min}\{h: \hat g(\kk)\not=0\}$
\be
\hat S(\kk)=\sum_{j=h_\kk}^{h_\kk+1} Q^{(j)}_\kk \hat g^{(j)}(\kk)
Q^{(j)}_\kk-\sum_{j=h_\kk}^{h_\kk+1} G^{(j)}(\kk)\hat W^{(j-1)}_{2,0}(\kk)G^{(j)}(\kk)
\ee 
where from \pref{2.52} $|\hat W^{(j-1)}_{2,0}(\kk)|\le C|\l|\g^{(1+\th)j}$, 
$Q^{(h)}_\kk$ is defined inductively by the relation $Q^{(1)}=1$ and %
\be
Q^{(h)}_\kk=1-\hat W^{(h)}_{2,0}(\kk) g^{(h+1)}(\kk) Q^{(h+1)}_\kk
\ee
and
\be
G^{(h+1)}(\kk)=\sum_{k=h=1}^1 g^{(k)}(\kk) Q^{(k)}_\kk
\ee
so that
\be
|Q^{(h)}_\kk-1|\le C|\l|\g^{\th h}\quad |\hat G^{(j)}(\kk)|\le 
C|\l|\g^{-\th h}
\ee
Using that
\be
\hat g^{(h)}(\kk)=\frac{f_h(\kk)}
{-i k_0(1+z_{-\io})+(1+\a_{-\io})\frac{k^2}{2}}+\hat r^{(h)}(\kk)
\ee
where
\be
|\hat r^{(h)}(\kk)|\le C\g^{(1-\th)h}
\ee
so that \pref{mama} follows.

\section{Renormalization Group integration: the second regime in the metallic 
phase.}

We have now to consider the integration of the scales with $h< h^*$, that is
\be
e^{-\WW(0)}=e^{-\beta L F_{h^*}}\int P(d\psi^{(\le h^*)})e^{V^{(h^*)}(\psi) 
}\label{v16}
\ee
where $P(d\psi^{(\le h^*)})$ has propagator given by
\be
 g^{(\le h^*)}(\xx)=\int d\kk\frac{\chi_{\le h^*}(\kk)}{ -i 
k_0(1+z_{h^*})+(1+\a_{h^*})(\cos k-1)+r+\g^{h^*}\m_{h^*} }\label{24a}
\ee
with $z_{h^*},\a_{h^*},\n_{h^*}=O(\l)$.

The denominator of the propagator \eqref{24a} vanishes in correspondence of the
two Fermi momenta and we need a multiscale decomposition.  It is convenient to 
rewrite \pref{v16} in the following way
\be
\int P(d\psi^{(\le h^*)})e^{V^{(h^*)}(\psi)} = \int \tilde P(d\psi^{(\le 
h^*)})e^{V^{(h^*)}(\psi)-\g^{h^*}
\n_{h^*}\int d\xx\psi^+_{\xx}\psi^-_{\xx} }\label{v16p}
\ee
where $\tilde P(d\psi^{(\le h^*)})$ has propagator
\be
g^{(\le h^*)}(\xx)=\int d\kk e^{i\kk\xx}\frac{\chi_{\le h^*}(\kk)}{ -i
k_0(1+z_{h^*})+(1+\a_{h^*})(\cos k-\cos p_F) }\label{24ap}
\ee
and 
\be
(1+\a_{h^*})\cos 
p_F=(1+\a_{h^*})-r-\g^{h^*}\m_{h^*}+\g^{h^*}\n_{h^*}\label{cccc}
\ee
Observe that, assuming that also $\nu_{h^*}\leq K|\l|$, then we have 
\be
 C_-\sqrt r\leq p_F\leq C_+ \sqrt r \label{pf}
\ee
for $\l$ small enough.
The strategy of the analysis is the following: 
\begin{enumerate}[a)]
 \item we will perform a multiscale analysis of \eqref{v16}. In this analysis 
we will have to chose $\nu_{h^*}=O(\l)$ as function of $p_F$ and $\l$ to obatin 
a convergent expansion. 

\item at the end of the above construction we will use
\pref{cccc} to obtain the Fermi momentum $p_F$ as function of $\l$ and $r$. 
\end{enumerate}

We can now write
\[
 \chi_{\leq h^*}(\kk)=\chi_{\leq h^*,1}(\kk)+\chi_{\leq h^*,-1}(\kk)
\]
where
\[
 \chi_{\leq h^*,\o}(\kk)=\tilde\vartheta\left(\o \frac{k}{p_F}\right) 
\chi_{\leq h^*}(\kk)
\]
where $\o=\pm 1$, $\tilde\vartheta$ is a smooth function such that 
$\tilde\vartheta(k)=1$ 
for $k>\frac{1}{2}$ and $\tilde\vartheta(k)=0$ for $k<-\frac{1}{2}$ and
\[
 \tilde\vartheta(k)+\tilde\vartheta(-k)=1
\]
for every $k$. Thus 
$\tilde\vartheta\left(\frac{k}{p_F}\right)$ is  equal to 1 in a neighbor of 
$p_F$ and 0 in a neighbor of $-p_F$. Clearly  $\chi_{\leq 
h^*,\pm 1}(\kk)$ is a smooth, compact support function and it allows us to write
\be\label{omega}
 g^{(\leq h^*)}(\xx)=\sum_{\o=\pm 1} e^{i\o p_F x} g^{(\leq h^*)}_\o(\xx)
\ee
where
\be
g^{(\leq h^*)}_\o(\xx)=\int d\kk
e^{i(\kk-\o\pp_F)\xx}\frac{\chi_{\leq h^*,\o}(\kk)}{ -i 
k_0(1+z_{h^*})+(1+\a_{h^*})(\cos
k-\cos p_F) }
\ee
with $\pp_F=(0,p_F)$. 

We observe that, if the running coupling constants were not present in the 
cut-off function $\chi_{\leq h}$, we could have used as a quasi-particle 
cut-off function
\[
\tilde\chi_{\leq h^*,\pm 1}(\kk)=\vartheta\left(\pm k\right) 
\chi_{\leq h^*}(\kk)
\]
where $\vartheta(k)=1$ if $k>0$ and 
$\vartheta(k)=0$ if $k<0$. Indeed, thanks to \eqref{cha}, this would have made 
essentially no difference. On the other hand, thanks to \eqref{pf} and 
\eqref{hh}, we have that $\chi_{\leq h^*,\pm 1}$ differs from 
$\tilde\chi_{\leq h^*,\pm 1}$ only for a finite number (not depending on $r$) 
of scales so that this does not modify our qualitative picture. Finally notice 
that the argument of 
$\tilde\vartheta$ is not scaled with $\g^{-h}$ but only with 
$p_F^{-1}=O(\g^{-\frac{h^*}2})$. 

The multiscale integration  is done exactly as in \cite{BM1}. The localization operation
is defined in the following way
\begin{align}
\LL_2 \int d\underline\xx  
W_{4}(\xx_1,\xx_2,\xx_3,\xx_4)\prod_{i=1}^4\psi^{\e_i}_{\xx_i,\o_i}=&
\hat W_4(0)\int d\xx 
\psi^+_{\xx,1}\psi^{-}_{\xx,1}\psi^+_{\xx,-1}\psi^{-}_{\xx,-1}\nn\\
\LL_2 \int d\underline\xx 
W_{2}(\xx_1,\xx_2)\psi^{+}_{\xx_1,\o}\psi^{-}_{\xx_2,\o}=&
\hat W_2(0)\int\psi^{+}_{\xx,\o}\psi^{-}_{\xx,\o}d\xx+\nn\\
\partial_1 \hat W_2(0)
\int\bar 
\psi^{+}_{\xx,\o}\Delta_1 \psi^{-}_{\xx,\o}d\xx+
&\partial_0\hat W_2(0)\int 
\psi^{+}_{\xx,\o}\partial_0 \psi^{-}_{\xx,\o}d\xx
\end{align}
where 
\[
 \bar\Delta_1 f(\xx)=2\int d\kk (\cos k-\cos p_F) e^{i\kk\xx}\hat 
f(\kk)\quad \hbox{if} \quad f(\xx)=\int d\kk e^{i\kk\xx}\hat f(\kk)
\]
Note that in the kernels $W_l$ are included the oscillating factors $e^{i\o p_F 
x}$ coming form \eqref{omega}.

After the integration of the scale $\psi^{h^*},..\psi^h$ we get
\be
e^{-\WW(0)}=e^{-\beta L F_{h}}\int P_{Z_h}(d\psi^{(\le 
h)})e^{V^h(\sqrt{Z_h}\psi)}
\ee
where $P_{Z_h}(d\psi^{(\le h)})$ is the Grasmann integration with propagator 
$\frac{g_\o^{(\leq h)}}{Z_h}$ where
\be\label{pan}
 g_\o^{(\leq h)}(\xx)=\int d\kk e^{i(\kk-\o\pp_F)} \frac{\chi_{\leq 
h,\o}(\kk)}{(1+z_{h^*})ik_0+(1+\alpha_{h^*})(\cos k-\cos p_F)}
\ee
We can now write
\[
 \int P_{Z_h}(d\psi^{(\le h)})e^{V^h(\sqrt{Z_h}\psi)}=\int 
\tilde P_{Z_{h-1}}(d\psi^{(\le h)})e^{\tilde V^h(\sqrt{Z_h}\psi)}
\]
where
\begin{align}
\LL_2 \tilde V^h=&l_h \int d\xx 
\psi^+_{\xx,1}\psi^{-}_{1,\xx}\psi^+_{\xx,-1}\psi^{-}_{-1,\xx}+(a_h-z_h)
\sum_{\o}\int d\xx\psi^+_{\o,\xx}\partial\psi_{\o,\xx}+\crcr 
&n_h\int d\xx\psi^+_{\o,\xx}
\psi_{\o,\xx}
\end{align}
while $\tilde P_h(d\psi^{(\le h)})$ is the integration with propagator 
identical to \eqref{pan} but with $\chi_{\leq h,\o}(\kk)$ replaced by 
$ \frac{\chi_{\leq h,\o}(\kk)}{\tilde Z_{h-1}(\kk)}$ with
\[
\tilde Z_{h-1}(\kk)=Z_h+\chi_{\leq h,\o}(\kk)Z_h z_h
\]
Setting $Z_{h-1}=\tilde Z_{h-1}(0)$, we can finally write
\be
\int P_{Z_h}(d\psi^{(\le h)})e^{V^h(\sqrt{Z_h}\psi)}=e^{-\b L e_h}\int 
P_{Z_{h-1}}(d\psi^{(\le h-1)})\int \tilde P_{Z_{h-1}}(d\psi^{(h)})
 e^{-\bar V^{(h)}(\psi^{(\le h)})}
\ee
where $\tilde P_{Z_{h-1}}$ is the integration with propagator $\frac{\tilde 
g_\o^{(h)}}{Z_{h-1}}$
\[
\tilde g_\o^{(h)}(\xx)=\int d\kk e^{i\xx(\kk-\o\pp_F)}\frac{\tilde 
f_{h,\o}(\kk)} 
{(1+z_{h^*})ik_0+(1+\alpha_{h^*})(\cos k-\cos p_F)}
\]
where
\[
 \tilde f_{h,\o}(\kk) =Z_{h-1}\left[\frac{\chi_{\leq 
h,\o}(\kk)}{\tilde Z_{h-1}(\kk)}- \frac{\chi_{\leq h-1,\o}(\kk)}{Z_{h-1}}\right].
\]
Finally we have
\[
\bar V^{(h)}(\psi^{(\le h)})=\tilde  
V^{(h)}\left(\sqrt{\frac{Z_h}{Z_{h-1}}}\psi^{(\le h)}\right)
\]
so that
\begin{align}
\LL_2 \tilde V^h=\l_h \int d\xx 
\psi^+_{\xx,1}\psi^{-}_{1,\xx}\psi^+_{\xx,-1}\psi^{-}_{-1,\xx}+\d_h
\sum_{\o}\int d\xx\psi^+_{\o,\xx}\partial\psi_{\o,\xx}+\nn\\
\g^h \n_h\sum_{\o}\int 
d\xx\psi^+_{\o,\xx}
\psi_{\o,\xx}\label{loc1}
\end{align}
with
\[
 \g^h \nu_h=\frac{Z_h}{Z_{h-1}}n_h\qquad 
\d_h=\frac{Z_h}{Z_{h-1}}(a_h-z_h)\qquad 
\l_h=\left(\frac{Z_h}{Z_{h-1}}\right)^2l_h
\]
We can now prove the following:

\begin{lemma}\label{L3}
For $h\leq h^*$, every $N$ and $\l$ small enough we have

\be\label{pp2}
|\partial_0^{n_0} \partial_1^{n_1}\tilde g^{(h)}_\o(\xx)|\le
C_N\frac{v_F^{-1}\g^h}{1+[\g^h|x_0|+v_F^{-1}\g^h|x|]^N}\g^{
h(n_0+n_1)}v_F^{-n_1} 
\ee
with $v_F=\sin(p_F)=O(r^{\frac{1}{2}})$.
\end{lemma}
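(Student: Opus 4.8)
The strategy is to follow the proof of Lemma~\ref{L1} essentially line by line, the only change being that the two components of $\kk$ now scale differently because below the crossover scale $h^*$ the two Fermi points are resolved. Write $\tilde g^{(h)}_\o(\xx)=\int d\kk\,e^{i(\kk-\o\pp_F)\xx}\,\hat g^{(h)}_\o(\kk)$ with $\hat g^{(h)}_\o(\kk)=\tilde f_{h,\o}(\kk)\,\hat D_{h^*}(\kk)^{-1}$ and $\hat D_{h^*}(\kk)=-ik_0(1+z_{h^*})+(1+\a_{h^*})(\cos k-\cos p_F)$, so that $|\hat D_{h^*}(\kk)|^2=(1+z_{h^*})^2k_0^2+(1+\a_{h^*})^2(\cos k-\cos p_F)^2$ is exactly the quantity cut off by $\chi_{\le h,\o}$ (in this regime $z_{h^*},\a_{h^*}$ are frozen, and the wave-function factor $Z_{h-1}/\tilde Z_{h-1}(\kk)=(1+z_h)/(1+\chi_{\le h,\o}(\kk)z_h)=1+O(\l)$ is a smooth harmless prefactor, absorbed into $\tilde f_{h,\o}$). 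Using the hypotheses $|z_{h^*}|,|\a_{h^*}|,|\n_{h^*}|\le K|\l|$ and \pref{pf}, the first step is to record the geometry of $\mathrm{supp}\,\tilde f_{h,\o}$: there $c\g^h\le|\hat D_{h^*}(\kk)|\le C\g^h$, hence $|k_0|\le C\g^h$, while the constraint $|\cos k-\cos p_F|\le C\g^h$ together with the support of $\tilde\vartheta(\o k/p_F)$ confines $k$ to $|k-\o p_F|\le C\g^h/v_F$. Since $h\le h^*$ forces $\g^h\le\g^{h^*}\le a_0^{-1}|r|\asymp v_F^2$, this window has width $\ll v_F$; consequently $-\sin k=-\o v_F+O(v_F^{-1}\g^h)$ on the support, and $\cos k-\cos p_F=-\o v_F(k-\o p_F)\,\big(1+O(\g^h v_F^{-2})\big)$ with $\g^h v_F^{-2}\le C$. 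Thus, uniformly in $h\le h^*$, $\hat D_{h^*}$ is up to bounded factors the relativistic symbol $-ik_0-\o v_F(k-\o p_F)$, and the substitution $u=\cos k-\cos p_F$ on the branch near $\o p_F$ has Jacobian $|dk/du|=|\sin k|^{-1}\asymp v_F^{-1}$, mapping $\mathrm{supp}\,\tilde f_{h,\o}$ diffeomorphically onto an annular set in which $(k_0,u)$ ranges over $\{c\g^h\le\sqrt{k_0^2+u^2}\le C\g^h\}$; every power of $v_F$ and $\g^h$ in \pref{pp2} becomes transparent in these variables.

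With this in hand the estimates are exactly those of Lemma~\ref{L1}. For $n_0=n_1=0$, $|\tilde g^{(h)}_\o(\xx)|\le\int|\hat g^{(h)}_\o|\,d\kk\le C\g^{-h}\,|\mathrm{supp}\,\tilde f_{h,\o}|\le C\g^{-h}\cdot\g^h\cdot v_F^{-1}\g^h=C\,v_F^{-1}\g^h$, the measure being $\g^h$ in $k_0$ times $v_F^{-1}\g^h$ in $k$. For the decay one writes $x_0^{N_0}x^{N_1}\tilde g^{(h)}_\o(\xx)=\int d\kk\,e^{i(\kk-\o\pp_F)\xx}\,\partial_{k_0}^{N_0}\partial_k^{N_1}\hat g^{(h)}_\o(\kk)$ and bounds the integrand on $\mathrm{supp}\,\tilde f_{h,\o}$ by $C_{N_0,N_1}\,\g^{-h}\,\g^{-hN_0}(v_F\g^{-h})^{N_1}$, using that $\tilde f_{h,\o}$ is $\asymp 1$ and varies on the box scales $\g^h$ in $k_0$ and $v_F^{-1}\g^h$ in $k$, that $|\hat D_{h^*}^{-1}|\le C\g^{-h}$, $|\partial_{k_0}\hat D_{h^*}|\le C$, and $|\partial_k\hat D_{h^*}|\le C v_F$; multiplying by the measure and summing over $N_0+N_1\le N$ gives $|\tilde g^{(h)}_\o(\xx)|\le C_N\,v_F^{-1}\g^h\,[1+(\g^h|x_0|+v_F^{-1}\g^h|x|)^N]^{-1}$. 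Finally $\partial_0^{n_0}\partial_1^{n_1}$ acting on $\tilde g^{(h)}_\o$ inserts the multiplier $(ik_0)^{n_0}(e^{i(k-\o p_F)}-1)^{n_1}$ (the oscillation $e^{i\o p_F x}$ having been extracted into the quasi-particle field), which on the support is $O\big(\g^{hn_0}(v_F^{-1}\g^h)^{n_1}\big)$; rerunning the argument produces the extra factor $\g^{h(n_0+n_1)}v_F^{-n_1}$, which is \pref{pp2}. Equivalently, after the substitution $u=\cos k-\cos p_F$ and the rescaling $(k_0,u)=\g^h(k_0',u')$ everything reduces to a finite derivative count for a fixed smooth function on a fixed annulus, the overall $v_F^{-1}$ being produced once and for all by the Jacobian.

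I expect the only genuinely delicate point — and the sole place where $h\le h^*$ is used — to be the control of the $k$-derivatives of $\hat D_{h^*}^{-1}$ of order $\ge 2$. The first derivative $\partial_k\hat D_{h^*}=-(1+\a_{h^*})\sin k$ is honestly small ($\asymp v_F$) on the support, but the curvature $\partial_k^2\hat D_{h^*}=-(1+\a_{h^*})\cos k\asymp 1$ is not, so a naive Leibniz expansion of $\partial_k^{N_1}\hat D_{h^*}^{-1}$ generates terms like $\hat D_{h^*}^{-2}\,\partial_k^2\hat D_{h^*}\sim\g^{-2h}$ that are a priori larger than the target $(v_F\g^{-h})^2\g^{-h}=v_F^2\g^{-3h}$. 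These terms are reconciled precisely because $\g^h\le\g^{h^*}\le a_0^{-1}|r|\asymp v_F^2$ gives $\g^{-h}\le C v_F^2\g^{-2h}$, so every curvature insertion is dominated by a pair of linear insertions — the analytic translation of the fact that below $h^*$ the dispersion is effectively linear with velocity $v_F$ (above $h^*$, in Lemma~\ref{L1}, the very same quantities recombine into the parabolic scaling $\g^{h/2}$, with no $v_F$). Since $\tilde g^{(h)}_\o$ is an explicit single-scale function, no anchored-tree or Gram-matrix input enters here, and the remaining steps are routine and strictly parallel to the proof of Lemma~\ref{L1}.
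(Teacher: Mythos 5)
Your proposal is correct and follows essentially the same route as the paper: you identify the support geometry ($|k_0|\le C\g^h$, $|k-\o p_F|\le C\g^h/v_F$, hence measure $\asymp v_F^{-1}\g^{2h}$), you observe that the lattice derivative on the quasi-particle propagator inserts the multiplier $\sim(k-\o p_F)=O(\g^h/v_F)$ since the oscillation $e^{i\o p_F x}$ has already been extracted, and you control the $k$-derivatives of $\hat D_{h^*}^{-1}$ by noting that curvature insertions ($\partial_k^2\hat D\asymp 1$) are dominated by pairs of linear ones precisely because $\g^h\lesssim v_F^2$ for $h\le h^*$. The paper's proof organizes the same $k$-derivative bookkeeping through the change of variable $\partial_k=-\sin k\,\partial_{\cos k}$ (its eq.\ for $\partial_0^{N_0}\partial_1^{N_1}\hat g^{(h)}$ and the case split $P_1\lessgtr N_1/2$), which is your Leibniz analysis in slightly different clothing; both use $\g^h\le C r$ in each branch, so "$h\le h^*$ is used" throughout the derivative estimate, not only in the curvature terms. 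These are presentational differences; the argument and all exponents agree.
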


\begin{proof}
We can write
\[
 \cos k-\cos p_F=\cos p_F(\cos(k-\o p_F)-1)+\o v_F\sin(k-\o p_F)
\]
Using \eqref{hhe} it easily follows that
\[
|\sin(k-p_F)|\leq C\g^hr^{-\frac{1}{2}}
\]
and thus $|k-p_F|\leq C\g^hr^{-\frac{1}{2}}$. For this case we get
\be\label{support1}
 \int \tilde\vartheta(k)\chi_{\leq h}(\kk) d\kk \leq C r^{-\frac{1}{2}}\g^{2h}.
\ee
The analogous of \eqref{esti} for the present Lemma is
\be\label{esti1}
|x_0^{N_0}x^{N_1}\tilde 
g^{(h)}_\o(\xx)|\le C
\g^{h(1-N_0-N_1)}r^{\frac{N_1-1}{2}}
\ee
To prove it we observe that \eqref{deriv} and \eqref{scaling} remain true. 
Indeed the only difference arise due to the presence of $\tilde\vartheta$. But 
this does not change the estimates since its derivative gives a smaller factor 
$O(\g^{-\frac{h^*}2})$ as compared to the factor $O(\sqrt{r}\g^{-h})$ 
coming from the derivative 
of $\tilde f_{h,\o}$.
Again we use \eqref{prods} for $P_1\le N_1/2$ together with
\[
\g^{-h(1+N_0+P_1)}\leq \g^{-h(1+N_0+N_1)}r^{N_1-P_1}\leq 
\g^{-h(1+N_0+N_1)}r^{\frac{N_1}2}.
\]
Reasoning as before, for $P_1>N_1/2$ we get
\be
\prod_{i=i}^{P_1}\frac{d^{p_1}}{dk^{p_i}}\sin(k)\le C
\g^{(2P_1-N_1)h}r^{\frac{N_1}{2}-P_1}.
\ee
Observing that
\[
\g^{-h(1+N_0+P_1)}\g^{(2P_1-N_1)h}r^{\frac{N_1}{2}-P_1}=\g^{-h(1+N_0+N_1)}r^{
\frac{N_1}{2}}\left(\frac{\g^h}{r}\right)^{P_1}\leq C\g^{-h(1+N_0+N_1)}r^{
\frac{N_1}{2}}
\]
and collecting we get
\be
|\partial_0^{N_0}\partial_1^{N_1} \hat 
g^{(h)}(\kk)|\le \g^{-(1+N_0+N_1)h}r^{\frac{N_1}{2}}
\ee
The Lemma follows easily combining the above estimate with \eqref{support1} and 
the analogous of \eqref{powerk}.
\hfill\end{proof}

Again the effective potential can be written as a sum over trees similar to the 
previous ones but with the following modifications:
\begin{enumerate}
\item We associate a label $h\le h^*$ with the root.

\item With each endpoint $v$ we associate one of the monomials
contributing to \\ $\RR_2 {\cal V}^{(h^*)}(\psi^{(\le h_v-1)})$
or one of the terms contributing to $\LL_2\VV^{(h_v)}
(\psi^{(\le h_v-1)})$.
\end{enumerate}

The main result of this section is the following Lemma.

\begin{lemma} Assume that
\be
|\l_k|,|\d_k|\le C v_F |\l|\quad  |\n_k|\le C |\l|\label{indp}
\ee
than there exists a constants $\l_0>0$, independent
of $\b$, $L$ and $r$, such that, for $h<h^*$, the kernels $W^{(h)}_{l}$
are analytic functions of  $\lambda$ for $|\l|\le \l_0$. Moreover they
satisfy
\be \frac1{\b L}\int d\xx_1\cdots d\xx_{l}|W^{(h)}_{l}
(\xx_1,\ldots,\xx_l)|\le
\g^{h \left(2-\frac{l}{2}\right)} v_F^{\frac{l}{2}-1}
 \,(C|\l|)^{max(1,l-1)}\;.\label{2.52a}\ee
\end{lemma}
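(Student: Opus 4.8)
I would argue by downward induction on the scale $h\le h^*$, running the multiscale integration exactly as in Section~3 of \cite{BM1}; the one genuinely new ingredient is the systematic bookkeeping of the powers of $v_F=\sin p_F=O(\sqrt r)$. The cleanest way to organise this is to carry out the whole estimate in \emph{rescaled relativistic variables}: write $\kk$ near $\o\pp_F$, set $x=v_F\tilde x$ (only the spatial component is rescaled), $\tilde\xx=(x_0,\tilde x)$, and $\tilde\psi_{\tilde\xx,\o}=\sqrt{v_F}\,\psi_{\xx,\o}$. By the identity $\cos k-\cos p_F=\cos p_F(\cos(k-\o p_F)-1)+\o v_F\sin(k-\o p_F)$ used in the proof of Lemma~\ref{L3}, for $h<h^*$ the quadratic term $\cos p_F(\cos(k-\o p_F)-1)$ is, apart from the $O(1)$ topmost scales (which only change constants, as remarked after \eqref{pf}), subdominant with respect to $\o v_F\sin(k-\o p_F)$, since $\g^h\lesssim r\sim v_F^2$ on the support of $\tilde f_{h,\o}$. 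Hence, up to a remainder controlled exactly as in the proof of Lemma~\ref{L3}, in the rescaled variables $\tilde g^{(h)}_\o$ becomes the standard scale-$h$ massless Dirac propagator: decay $(1+[\g^h|x_0|+\g^h|\tilde x|]^N)^{-1}$, prefactor $\g^h$, the usual $\g^{h(n_0+n_1)}$ gain under derivatives, and $L^1$ norm $\le C\g^{-h}$; moreover its Gram vectors have norm $\le C\g^{h/2}$, by the same argument through which $\|{\bf f}_\a\|,\|{\bf g}_\a\|\le C\g^{h/4}$ was obtained from Lemma~\ref{L1}. Under the same change of variables the localized couplings of \eqref{loc1} become $\tilde\l_h=v_F^{-1}\l_h$, $\tilde\d_h=v_F^{-1}\d_h$, $\tilde\n_h=\n_h$, so that the hypothesis \eqref{indp} states precisely that all rescaled running couplings are $O(|\l|)$; and the endpoints produced by $\RR_2\VV^{(h^*)}$ are, by Lemma~\ref{L2} together with $\g^{h^*}\sim r\sim v_F^2$, of natural relativistic size on scale $h^*$ — with, in fact, a spare factor $\g^{\th h^*}\sim v_F^{1/2}$. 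This spare factor is precisely the mechanism announced in the Introduction: the gains left over from the first regime compensate, with room to spare, the small velocities of the second.

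With these inputs the analysis is word for word that of Section~3 of \cite{BM1}. One writes $\tilde\VV^{(h)}$ as a sum over trees $\t\in\TT_{h,n}$ with the endpoint assignment of the present section (each endpoint carrying a monomial of $\RR_2\VV^{(h^*)}$ or one of the three monomials of $\LL_2\VV^{(h_v)}$), expands each non-endpoint vertex by the Brydges--Battle--Federbush formula \eqref{2.46}, bounds each Gram determinant by Gram--Hadamard using the norm estimate above, and integrates the anchored tree-graph lines by their $L^1$ norm $\le C\g^{-h_v}$. The operator $\RR_2$, acting through the zeros $D^\e$ on the quartic terms and through the second-order Taylor remainder on the quadratic ones, produces the renormalization gain $\prod_{v\ \mathrm{not}\ \mathrm{e.p.}}\g^{-(h_v-h_{v'})z_2(P_v)}$ with $z_2(P_v)=2,1,0$ for $|P_v|=2,4,\ge6$ respectively; the interpolated points, and the avoidance of spurious factorials, are handled exactly as in \cite{BM1}.

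Recombining the dimensional factors through the identities \eqref{2.60}, with the first-regime exponent $3/2-l/4$ replaced throughout by the relativistic $2-l/2$ dictated by the $\g^h$ prefactor of $\tilde g^{(h)}_\o$, one reaches, in complete analogy with \eqref{2.57}--\eqref{2.61b}, a bound in which the overall factor is $\g^{h(2-l/2)}$, each non-endpoint vertex contributes $\g^{-(h_v-h_{v'})c(P_v)}$, the $\RR_2\VV^{(h^*)}$ endpoints are of natural size by Lemma~\ref{L2}, and the $\LL_2$-endpoints carry coefficients $\tilde\l_{h_v},\tilde\d_{h_v},\tilde\n_{h_v}=O(|\l|)$ by \eqref{indp}. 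The renormalized scaling dimension satisfies
\be
c(P_v)=\frac{|P_v|}{2}-2+z_2(P_v)\ \ge\ 1\ >\ 0
\ee
for every vertex (the analogue of \eqref{101}); this strict positivity is the short-memory property and lets one bound the sum over the scale labels of the trees, and over $\TT_{h,n}$, by $C^n$, the sum over the field assignments being controlled by the combinatorial inequality used after \eqref{2.61b}. The resulting series converges for $|\l|\le\l_0$ with $\l_0$ independent of $\b,L,r$, yielding for the rescaled kernels a bound of the form $\g^{h(2-l/2)}(C|\l|)^{\max(1,l-1)}$; analyticity in $\l$ on this disc follows from the uniform convergence of the explicitly analytic tree series, and the $L,\b\to\io$ limits as in \cite{BFM1}. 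Undoing the rescaling contributes exactly the Jacobian factor $v_F^{l/2-1}$ (from $x=v_F\tilde x$ together with $\tilde\psi=\sqrt{v_F}\,\psi$), so that
\be
\frac1{\b L}\int d\xx_1\cdots d\xx_l\,|W^{(h)}_l(\xx_1,\ldots,\xx_l)|\ \le\ \g^{h\left(2-\frac l2\right)}v_F^{\frac l2-1}(C|\l|)^{\max(1,l-1)},
\ee
which is \eqref{2.52a}.

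The delicate point, and the only real obstacle, is the uniformity in $v_F$: every contracted (loop) propagator carries a factor $v_F^{-1}$, visible in \eqref{pp2}, and one must be sure that these never leave a residual negative power of $v_F$. Performing the computation in the rescaled variables makes this transparent — the $v_F$'s then enter only through the final Jacobian — but it does require checking that Lemma~\ref{L3} yields the relativistic single-scale estimates \emph{uniformly in} $r$; in particular, that the quasi-particle cutoff $\tilde\vartheta(\o k/p_F)$, which is scaled with $p_F^{-1}=O(\g^{-h^*/2})$ rather than with $\g^{-h}$, does not degrade them (its derivatives being smaller, by a factor $O(\g^{-h^*/2})$, than the $O(\sqrt r\,\g^{-h})$ produced by $\tilde f_{h,\o}$, as noted in the proof of Lemma~\ref{L3}), and recalling that $\chi_{\le h^*,\o}$ differs from $\tilde\chi_{\le h^*,\o}$ on only finitely many scales. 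Once this is in place, the remaining work — the combinatorics of the sums over trees and field labels and the careful treatment of the interpolated points so as to avoid bad factorials — is identical to that of Section~3 of \cite{BM1}.
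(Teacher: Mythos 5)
Your proof is correct, but it takes a genuinely different route from the paper. The paper keeps all the $v_F$ factors explicit throughout the tree bound: the Gram--Hadamard step contributes the ``dangerous'' factor $v_F^{-\left(\sum_i|P_{v_i}|/2-|P_v|/2-(s_v-1)\right)}$ at each non-endpoint vertex, and these are compensated endpoint by endpoint through the decomposition $\g^{h_{v'}(3/2-|I_v|/4)}=\g^{h_{v'}(2-|I_v|/2)}\cdot\g^{h_{v'}(-1/2+|I_v|/4)}$ together with the elementary inequality $\g^{h_{v'}(-1/2+|I_v|/4)}\le v_F^{-1+|I_v|/2}$ (using $\g^{h_{v'}}\le\g^{h^*}\lesssim v_F^2$); the telescoping identities then yield the global $v_F^{l/2-1}$. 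Your rescaling $\tilde x=x/v_F$, $\tilde\psi=\sqrt{v_F}\,\psi$ does the same compensation once and for all: the single-scale propagator becomes a standard relativistic one with no $v_F$-dependence, \eqref{indp} becomes $\tilde\l_h,\tilde\d_h,\tilde\n_h=O(|\l|)$, the first-regime endpoints become irrelevant vertices of natural relativistic size (with a spare $v_F^{2\theta}$, which you correctly note is not needed here but feeds the flow equations), and the analysis is literally that of \cite{BM1} with $v_F=1$; the Jacobian $v_F^{l/2-1}$ then appears cleanly at the very end. Both routes are sound. Yours has the advantage of making the mechanism transparent (the $v_F$'s never actually circulate through the induction), at the cost of having to verify once that the rescaled propagator really does satisfy uniform relativistic estimates despite the $O(1)$ quadratic correction $\cos p_F(\cos k'-1)$ at the scales just below $h^*$ and despite the quasi-particle cutoff $\tilde\vartheta(\o k/p_F)$ being scaled with $p_F^{-1}$ rather than $\g^{-h}$ --- both of which you flag and which are handled in the proof of Lemma~\ref{L3}. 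The paper's route avoids introducing a change of variables but requires the bookkeeping identities it sets up in \eqref{82} and the two displayed telescoping products that follow.
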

\vskip.3cm
\begin{proof}
The proof of this Lemma follows closely the line of \cite{BM1}. The only major 
difference is the presence of the small factors in \eqref{pp2}. We will report 
only the modification of the proof needed to deal with those factors.

We start noting that the analogous of the bound \pref{2.57}
becomes
\begin{align}  \frac1{\b L}&\int d\xx_1\cdots d\xx_{l}|W^{(h)}_{l}
(\xx_1,\ldots,\xx_l)|\le\nn\\
&\sum_{n\ge 1}\sum_{\t\in {\cal T}_{h,n}}
\sum_{\substack{\PP\in{\cal P}_\t\\|P_{v_0}|=
l}}\sum_{T\in{\bf T}}
C^n \left[\prod_{v\ {\rm not}\ {\rm e.p.}} \frac{1}{s_v!}
\g^{h_v\left(\sum_{i=1}^{s_v}\frac{|P_{v_i}|}{ 2}-\frac{|P_v|}{
2}-2(s_v-1)\right)}\right]\crcr
&\left[\prod_{v\ {\rm not}\ {\rm 
e.p.}}\left(\frac{1}{v_F}\right)^{\sum_{i=1}^{s_v}\frac{|P_{v_i}|}{ 
2}-\frac{|P_v|}{
2}-(s_v-1)}\right]\left[\prod_{v\ {\rm not}\ {\rm e.p.}}
\g^{-(h_v-h_{v'})z_2(P_v)}\right]\\
&\left[\prod_{v\ {\rm e.p.}; v\in I^{R}, |I_v|\ge 6} 
|\l|\g^{h^*\left(\frac{3}{
2}-\frac{|I_v|}{ 4}\right)} \right] 
\left[\prod_{v\ {\rm e.p.}; v\in I^{R}, |I_v|=2,4} |\l|\g^{h^*\left(\frac{3}{
2}-\frac{|I_v|}{ 4}\right)+z_2(P_v) (h_{v'}- h^*)} \right] \nn\\
&\left[\prod_{v\ {\rm e.p.}; v\in I^{\l}} |\l| v_F\right] 
\left[\prod_{v\ {\rm e.p.}; v\in I^{\n,\d}} |\l|  \g^{h_{v'}}\right]
\left[\prod_{i=1}^n C^{p_i}\right]\nn
\label{2.57a}
\end{align}
where:
\begin{enumerate}[1.]
\item
the last factor keeps into account the presence of the factors $Z_h/Z_{h-1}$;
\item
the factor $\left(\frac{1}{v_F}\right)^{\frac{|P_{v_i}|}{ 2}-\frac{|P_v|}{
2}-(s_v-1)}$ comes from the bound on the Gram determinant and the fact that  
$|g^h(\xx)|\le \frac{\g^h}{v_F}$; 

\item $z_2(P_v)=1$ for $|P_v|=4$ and $z_2(p_v)=2$ for $|P_v|=2$;

\item $I^R$ is the set of endpoints associated to $\RR\VV^{(h^*)}$ and the 
factor  $\g^{h^*\left(\frac{3}{2}-\frac{|I_v|}{4}\right)}$ comes from the 
bound \pref{2.52}; 

\item $I^\l$ is the set of end-points associated to $\l_k$ and the factor $v_F$ 
comes from \pref{indp};  

\item $I^\d$ is the set of end-points associated to $\d_k$ and 
the derivative in  \pref{loc1} produces an extra $\g^{h_{v'}}/v_F$;

\item $I^\n$ is the set of end-points associated to $\n_k$ and the factor 
$\g^{h_{v'}}$ comes from \pref{loc1}.

\end{enumerate}

Proceeding like in the proof of Lemma \ref{L2} using \pref{56} we get
\begin{align}
\frac1{\b L}&\int d\xx_1\cdots d\xx_{l}|W^{(h)}_{l}
(\xx_1,\ldots,\xx_l)|\le\nn\\
&\sum_{n\ge 1}\sum_{\t\in {\cal T}_{h,n}}
\sum_{\substack{\PP\in{\cal P}_\t\\ |P_{v_0}|=l}}\sum_{T\in{\bf T}}
C^n  \g^{h\left(2-\frac{1}{2}|P_{v_0}|+\frac{1}{2}|I_{v_0}|-2 n\right)}\crcr
&\left[\prod_{v\ {\rm not}\ {\rm e.p.}} \frac{1}{s_v!}
\g^{(h_v-h_{v'})\left(2-\frac{|P_v|}{ 2}+\frac{|I_v|}{2}- 2 
n(v)+z_2(P_v)\right)}\right]\nn\\
&\left[\prod_{v\ {\rm e.p.}; v\in I^{R}} 
|\l|\g^{h_{v'}\left(\frac{3}{2}-\frac{|I_v|}{4}\right)} \right] 
\left[\prod_{v\ {\rm e.p.}; v\in I^{\l}} |\l| v_F\right] 
\left[\prod_{v\ {\rm e.p.}; v\in I^{\n,\d}} |\l|  \g^{h_{v'}}\right]
\left[\prod_{i=1}^n C^{p_i}\right]\nn\\
&\left[\prod_{v\ {\rm not}\ {\rm e.p.}}
\left(\frac{1}{v_F}\right)^{\left(\sum_{i=1}^{s_v}\frac{|P_{v_i}|}{2}-\frac{
|P_v| } { 2 } -(s_v-1)\right)}\right]
\end{align}

Finally using \pref{2.60} we arrive to
\begin{align} \frac1{\b L}&\int d\xx_1\cdots d\xx_{l}|W^{(h)}_{l}
(\xx_1,\ldots,\xx_l)|\le\nn\\
&\sum_{n\ge 1}\sum_{\t\in {\cal T}_{h,n}}
\sum_{\substack{\PP\in{\cal P}_\t\\ |P_{v_0}|=l}}\sum_{T\in{\bf T}}
C^n  \g^{h\left(2-\frac{1}{2}|P_{v_0}|\right)}
\left[\prod_{v\ {\rm not}\ {\rm e.p.}} \frac{1}{s_v!}
\g^{(h_v-h_{v'})\left(2-\frac{|P_v|}{ 2}+z_2(P_v)\right)}\right]\nn\\
&\left[\prod_{v\ {\rm e.p.}, v\in I^R} 
\g^{h_{v'}\left(-\frac{1}{2}+\frac{|I_v|}{4}\right)}\right]
\left[\prod_{v\ {\rm e.p.}; v\in I^{\l}} |\l| v_F\right] 
\left[\prod_{v\ {\rm e.p.}; v\in I^{\n,\d}} |\l|\right]
\left[\prod_{i=1}^n C^{p_i}\right]\nn\\
&\left[\prod_{v\ {\rm not}\ {\rm e.p.}}
\left(\frac{1}{v_F}\right)^{\left(\sum_{i=1}^{s_v}\frac{|P_{v_i}|}{2}-
\frac{|P_v|}{2}-(s_v-1)\right)}\right]
\end{align}
Because $\g^{h'_v}\le \g^{h^*}\le v_F^2$ and $|I_v|\ge 2$ we have
\[
 \g^{h_{v'}\left(-\frac{1}{2}+\frac{|I_v|}{4}\right)}\leq
v_F^{-1+\frac{|I_v|}{2}}
\]
so that
\begin{align}
\frac1{\b L}\int &d\xx_1\cdots d\xx_{l}|W^{(h)}_{l}
(\xx_1,\ldots,\xx_l)|\le\nn\\
&\sum_{n\ge 1}\sum_{\t\in {\cal T}_{h,n}}
\sum_{\substack{\PP\in{\cal P}_\t\\ |P_{v_0}|=l}}\sum_{T\in{\bf T}}
C^n  \g^{h\left(2-\frac{1}{2}|P_{v_0}|\right)}
\left[\prod_{v\ {\rm not}\ {\rm e.p.}} \frac{1}{s_v!}
\g^{(h_v-h_{v'})\left(2-\frac{|P_v|}{ 2}+z_2(P_v)\right)}\right]\nn\\
&\left[\prod_{v\ {\rm e.p.}, v\in I^R, I^\l} 
v_F^{-1+\frac{|I_v|}{2}}\right]\left[\prod_{v\ {\rm not}\ {\rm e.p.}}
\left(\frac{1}{v_F}\right)^{\left(\sum_{i=1}^{s_v}\frac{|P_{v_i}|}{2}-
\frac{|P_v|}{2}-(s_v-1)\right)}\right]
|\l|^n\left[\prod_{i=1}^n C^{p_i}\right]\label{82}
\end{align}
For $v\in I^\d,I^\n$, one has $|I_v|=2$ so that $v_F^{-1+\frac{|I_v|}{ 2}}=1$,  
and we can write
\be
\left[\prod_{v\ {\rm e.p.}, v\in I^R, I^\l} 
v_F^{-1+\frac{|I_v|}{2}}\right]=\left[\prod_{v\ {\rm e.p.}} 
v_F^{-1+\frac{|I_v|}{2}}\right]= 
v_F^{-n+\sum_{v\ {\rm e.p} }\frac{|I_v|}{2}}
\ee
Using that
\[
 \sum_v (s_v-1)=n-1\qquad\qquad \sum_{v\ {\rm e.p.} 
}|I_v|=l+\sum_v\sum_{i=1}^{s_v}(|P_{v_i}|-|P_v|)
\]
we get
\begin{align}
\prod_{v\ {\rm e.p.}} v_F^{-1}
\prod_{v\ {\rm not}\ {\rm e.p.}} 
\left(\frac{1}{v_F}\right)^{-(s_v-1)}=& v_F^{-1}\nonumber\\
\prod_{v\ {\rm e.p.}} v_F^{\frac{|I_v|}{2}}
\prod_{v\ {\rm not}\ {\rm e.p.}} 
\left(\frac{1}{v_F}\right)^{\left(\sum_{i=1}^{s_v}\frac{|P_{v_i}|}{2}-
\frac{|P_v|}{2}\right)}=& v_F^{\frac{l}{2}}
\end{align}
Collecting these estimates we get 
\begin{align}
\frac1{\b L}\int& d\xx_1\cdots d\xx_{l}|W^{(h)}_{l}
(\xx_1,\ldots,\xx_l)|\le\\
&v_F^{\frac{l}{2}-1}\sum_{n\ge 1}\sum_{\t\in {\cal T}_{h,n}}
\sum_{\substack{\PP\in{\cal P}_\t\\ |P_{v_0}|=l}}\sum_{T\in{\bf T}}
C^n  \g^{h\left(2-\frac{1}{2}|P_{v_0}|\right)}
\left[\prod_{v\ {\rm not}\ {\rm e.p.}} \frac{1}{s_v!}
\g^{(h_v-h_{v'})\left(2-\frac{|P_v|}{2}+z_2(P_v)\right)}\right]\crcr
&|\l|^n\left[\prod_{i=1}^n C^{p_i}\right]\nn
\end{align}
Performing the sums as in the previous section we 
prove \pref{2.52a}.\hfill\end{proof}

\vskip.3cm
\0{\bf Remarks.} 
\begin{itemize}
\item Observe that, for $h\ge h^*$, bound \pref{2.52} says that the $L_1$ norm 
of the effective potential is $O(\g^{h (3/2-l/4)})$ while, for $h\le h^*$, bound 
\pref{2.52a} says that the $L_1$ norm of the effective potential is $O(\g^{h 
(2-l/2)} v_F^{\frac{l}{2}-1})$; the two bounds coincide of course at 
$h\simeq h^*$ since $\g^{h^*}\sim r$, $v_F\sim \sqrt{r}$ so that $r^{(2-l/2)} 
r^{\frac{l}{4}-\frac{1}{2}}=r^{\frac{3}{2}-\frac{l}{4}}$.

\item The fact that the Fermi velocity vanishes as $r$ approaches 0
produces the "dangerous" factor 
$\left(\frac{1}{v_F}\right)^{\frac{|P_{v_i}|}{2}-\frac{|P_v|}{2}-(s_v-1)}$
in \pref{2.57a} which is diverging as $r\to 0$. This is compensated by the 
extra factors of $v_F$ associated to the difference between the 
scaling dimensions the first and second regime, that is
\be
\left[\prod_{v\ {\rm
e.p.}} \g^{h_{v'}\left(\frac{3}{2}-\frac{|I_v|}{4}\right)}\right]=
\left[\prod_{v\ {\rm
e.p.}} \g^{h_{v'}\left(2-\frac{|I_v|}{2}\right)} \right]\left[\prod_{v\ {\rm
e.p.}} \g^{h_{v'}\left(-\frac{1}{2}+\frac{|I_v|}{4}\right)} \right]
\ee
\end{itemize}

\subsection{The flow of the running coupling constants}

We now prove by induction that, for $h\le h^*$ and $\th=\frac{1}{4}$ we have
\be
|\l_h|\le C |\l| r^{\frac{1}{2}+\th},\quad |\d_h|\le C |\l| r^{\frac{1}{2}+\th} 
\quad |\n_h|\le C|\l| \g^{\th h}\label{indpp}
\ee
First we check that \pref{indp} is true for $h=h^*$. By definition of the 
$\LL_2$ operation
\be
\l_{h^*}=\l[\hat v(0)-\hat v(2 p_F)]+O(\l^2 \g^{h^*(\frac{1}{2}+\th)})
\ee
where the second term in the r.h.s comes from \pref{2.52}; as $\hat v(k)$ is 
even the first term is $O(r)$ so that surely $\l_{h^*}$ vanishes as 
$O\left(r^{\frac{1}{2}+\th}\right)$. Moreover from  \pref{2.52}, taking into 
account that a 
derivative $\partial_1$ gives an extra $\g^{-h/2}$, that is
\be\label{gh}
\int d\xx |\partial_1 W_2^{(h^*)}(\xx)|\leq 
C|\lambda|\g^{h^*\left(\frac{1}{2}+\th\right)}
\ee
we get
\be
|\d_{h^*}|\le C\g^{h\left(\frac{1}{2}+\th\right)} |\l| \le C|\l| 
r^{\frac{1}{2}+\th} 
\ee

The flow of $\n_h$ is given by
\be
\n_{h-1}=\g\n_h+\b^{(h)}_\n(\vec v_h,...,\vec v_0)
\ee
where $\vec v_h=(\l_h,\d_h,\n_h)$. We can decompose the propagator as
\be\label{dec}
\tilde g^{(h)}_\o(\xx)=g^{(h)}_{\o,L}(\xx)+r^{(h)}_{\o}(\xx)
\ee
where
\be\label{lut}
g^{(h)}_{\o,L}(\xx)=\int d\kk e^{i\kk\xx}\frac{\tilde f_h(\kk)}{-i k_0+\o v_F k}
\ee
and $\tilde f_h$ has support contained in $C\g^{h-1}\le \sqrt{k_0^2+v_F^2 
k^2}\le C\g^{h+1}$. Moreover, for every $N$, we have
\be
|r^{(h)}_{\o}(\xx)|\le \left(\frac{\g^h}{v_F}\right)^3\frac{C_N}{1+\g^h 
(|x_0|+v_F^{-1}|x|)^N}\label{ssp}
\ee
that is the bound for $r^{(h)}_{\o}(\xx)$ has an extra factor $\g^{2h}/v_F^2\le 
\g^h$ with respect to the bound \eqref{pp2} for $\tilde g^{(h)}_\o(\xx)$. 

In the expansion for $\b^{(h)}_\n$ studied in the previous subsection, we can 
decompose every propagator as in \eqref{dec} and collect all the term that 
contains only $g_{L,\o}^{(h)}$ and that come from trees with no end-points 
associated to $\RR\VV^{(h^*)}$; this sum vanish due to parity. Therefore 
$\b^{(h)}_\n=O(\l \g^{\th h})$ and by iteration
\be
\n_{h-1}=\g^{-h+h^*}[\n_{h^*}+\sum_{k=h}^{h^*}\g^{k-h^*}\b^{(k)}_\n].
\ee
Thus we can choose $\n_{h^*}$ so that
\be
\n_{h^*}=-\sum_{k=-\io}^{h^*}\g^{k-h^*}\b^{(k)}_\n
\ee
This implies that
\be
\n_{h-1}=\g^{-h+h^*}[-\sum_{k=-\io}^{h}\g^{k-h^*}\b^{(k)}_\n]
\ee
and $|\n_h|\le C|\l|\g^{\th h}$.

We now study the flow equations for $\l_h$ and $\d_h$ with $h<h^*$ 
\bea
&&\l_{h-1}=\l_h+\b^{(h)}_\l(\vec v_h,...,\vec 
v_0)\nn\\&&\d_{h-1}=\d_h+\b^{(h)}_\d(\vec v_h,...,\vec v_0) 
\eea
where we have redefined $\d_0$  as to include the sum $\tilde \delta_0$ of the 
terms $O(\l)$, which satisfies
\be
|\tilde\delta_0|\leq C\left|\int d\kk k \partial^2 v(\kk+(\o-\o')\pp_F) 
g_\o^{\le h^*}(\kk)\right|
\ee
where one derivative over $v$ comes from the $\RR_1$ operation and the other 
from the definition of $\d$. Observe that
\be
|\tilde\delta_0|\le 
C\sum_{k\le h^*} v_F^{-2}\g^{2h}
\le C |\l| r
\ee
since $v_F k\le C\g^h$ in the support of $f_h$.

Again we can use \eqref{dec} and decompose the beta function
for $\a=\l,\d$ as
\be
\b^{(h)}_\a(\vec v_h,...,\vec v_0)=\bar\b^{(h)}_\a(\l_h,\d_h,
...,\l_0,\d_0)+\b^{(h)}_{\a,R}(\vec v_h,...,\vec v_0)
\ee
where $\bar\b^{(h)}_\a$ contains only propagators $g^{(h)}_{\o,L}(\xx)$ and 
end-points to which is associated $\l_k,\d_k$. Therefore $\b^{(h)}_{\a,R}$ 
contains either a propagator $r^{(h)}_{\o,L}(\xx)$, a $\n_k$ or an irrelevant 
term. Observe that

\begin{enumerate}
\item Terms containing a propagator $r_h$ or a factor $\n_h$ have an extra 
$\g^{\th h}$ in their bounds, therefore by an argument similar to the one used 
in (70) ({\it short memory property}) they can be bounded as 
$O(v_F \g^{\th h})$. The factor $v_F$ comes from the factor $v_F^{l/2-1}$ in 
\pref{2.52a} when $\a=\l$, and from the derivative $\partial_1$ in the 
case $\a=\d$.

\item The terms containing an irrelevant end-points associated to a term $\RR 
V^{(h^*)}$ have an extra $\g^{\th h^*}$ (coming from (59)) and an extra $\g^{\th 
(h-h^*)}$ for the short memory property; therefore they can be bound as $O(v_F 
\g^{\th h})$. The origin of the factor $v_F$ is the same as in the previous 
point. 

\end{enumerate}

In conclusion
\be
|\b^{(h)}_{\a,R}|\le C  v_F   \l^2
 \g^{\th h}
\label{xxb}
\ee
From \eqref{lut} it is easy to see that
\begin{align}
\bar\b_\l^{(h)}(\l_h,\d_h,
...,\l_0,\d_0)=&v_F \hat\b_\l^{(h)}\left(\frac{\l_h}{v_F},\frac{\d_h}{v_F},
...,\frac{\l_0}{v_F},\frac{\d_0}{ v_F}\right)\nn\\
\bar\b_\d^{(h)}(\l_h,\d_h,
...,\l_0,\d_0)=&v_F \hat\b_\d^{(h)}\left(\frac{\l_h}{v_F},\frac{\d_h}{v_F},
..,\frac{\l_0}{v_F},\frac{\d_0}{v_F}\right)
\end{align}
where  $\hat\b_\l^{(h)}(\l_d,\d_h,..,\l_,\d_0)$ is the beta function of a 
Luttinger model with $v_F=1$. It has been proved in \cite{BM2} that
\be
|\hat\b_\l^{(h)}(\l_d,\d_h,..,\l_0,\d_0)|\le C [\max(|\l_k|,|\d_k|)]^2 \g^{\th (h-h^*)}
\ee
therefore assuming by induction that $|\l_k|,|\d_k|\le 2 |\l| 
r^{\frac{1}{2}+\th}$ for $k\ge h$ we get
\be
|\bar\b_\a^{(h)}(\l_h,\d_h,
...,\l_0,\d_0)|\le 4 C v_F \l^2 r^{1+2\th}\g^{\th h} v_F^{-2}r^{-\th}
\le 4 
C v_F\l^2 \g^{\th h} r^\th\label{xxa}.
\ee
Thus
\be
|\l_{h-1}|\le |\l_{h^*}|+\sum_{k=h}^{h^*}4 
C v_F\l^2 \g^{\th h} r^\th\le 2|\l| r^{\frac{1}{2}+2 \th}
\ee
and the same is true for $\d_h$.

Moreover we have
\be
\frac{Z_{h-1}}{Z_h}=1+\b_z^{(h)}
\ee
so that
\be
\g^\h=1+\b^{-\io}\left(\frac{\l_{-\io}}{v_F}\right)
\ee
where $\b^{-\io}$ is the beta function with $v_F=1$; therefore
\be
Z_h=\g^{\h(h-h^*)}(1+A(\l))
\ee
with $|A(\l)|\le C|\l|$. Observe that $\h=O(\l^2 r^{4 \th})$, hence is 
vanishing as $r\to 0$ as $O(\l^2 r)$. 

Finally the inversion problem for $p_F$ can be studied as in section 2.9 of 
\cite{BFM1}. The analysis for the Schwinger function is done in a way 
similar to the one in section 3 above.

\end{document}